\newtheorem{theorem}{Theorem}
\theoremstyle{plain}
\newtheorem{corollary}{Corollary}
\newtheorem{lemma}{Lemma}
\newtheorem{proposition}{Proposition}
\newtheorem{question}{Question}
\numberwithin{equation}{section}
\begin{document}
\title[Harmonic sandpile dynamics]{Harmonic dynamics of the Abelian sandpile}
\author{Moritz Lang}
\email[M. Lang]{moritz.lang@ist.ac.at}
\author{Mikhail Shkolnikov}
\email[M.~Shkolnikov]{mikhail.shkolnikov@ist.ac.at}
\address[M. Lang and M. Shkolnikov]
{Institute of Science and Technology Austria, 
Am Campus 1, 3400 Klosterneuburg, Austria}%
\thanks{M.~Lang is grateful to the members of the Guet and Tka\v{c}ik groups of IST Austria for valuable discussions, comments and support.\hfil\break\indent M.~Shkolnikov is grateful to Nikita Kalinin for the inspiring communications.}
\date{\today}
\subjclass{60K35, 47D07, 20K01} %
\keywords{Abelian sandpile, identity, dynamics, criticality}%

\begin{abstract}
The abelian sandpile serves as a simple model system to study self-organized criticality, a phenomenon occurring in many important biological, physical and social processes. The identity element of the abelian group is a fractal composed of self-similar patches with periodic patterns, and the limit of this identity is subject of extensive collaborative research. Here, we analyze the evolution of the sandpile identity under harmonic fields of different orders. We show that this evolution corresponds to periodic cycles through the abelian group characterized -- to a large extend -- by the smooth transformation and apparent conservation of the patches constituting the identity. The dynamics induced by second and third order harmonic fields resemble smooth stretchings, respectively translations, of the identity, while the ones induced by fourth order harmonic fields resemble magnifications and rotations. Starting with order three, the dynamics pass through extended regions of seemingly random configurations which spontaneously reassemble into accentuated patterns. We show that the space of harmonic functions projects to the extended analogue of the sandpile group, thus providing a set of universal coordinates identifying configurations between different domains. Since the original sandpile group is a subgroup of the extended one, this directly implies that the sandpile group admits a natural renormalization. 
Furthermore, we show that the harmonic fields can be induced by simple Markov processes, and that the corresponding stochastic sandpile identity dynamics show remarkable robustness over dozens or hundreds of periods. 
Finally, we encode information into seemingly random sandpile configurations, and decode this information with an algorithm requiring minimal prior knowledge. Our results suggest that harmonic fields might split the sandpile group into sub-sets showing different critical coefficients, and that it might be possible to extend the fractal structure of the sandpile identity beyond the boundaries of its finite domain.
\end{abstract}
\maketitle

\section{Introduction}
\noindent Self-organized criticality (SOC) is the property of dissipative systems driven by fluctuating forces to automatically converge into critical configurations which eventually become unstable and relax in processes referred to as avalanches, characterized by scale-free spatio-temporal correlations \cite{Bak1987,Aschwanden2016}. Different to critical points e.g. of systems known from equilibrium statistical mechanics, SOC thus does not require the tuning of parameters, like the temperature or pressure, to reach criticality \cite[p.~2]{Bak1987,Paoletti2014}. The first model showing SOC was introduced by Bak, Tang and Wiesenfeld in 1987 \cite{Bak1987}. Nowadays known as the BTW model, it describes the evolution of an idealized sandpile under random dropping (addition) of grains of sand \cite{Bak1987}. The concept of SOC subsequently spread into various scientific fields \cite[p.~3ff]{Paoletti2014}, and was utilized to explain phenomena as diverse as solar flares \cite{Lu1991}, earthquakes \cite{Olami1992}, forest fires \cite{Malamud1998}, natural evolution \cite{Sneppen1995}, neuron firing \cite{Levina2007}, stock market crashes \cite[p.~317ff]{Sornette2017}, and similar (see \cite{Aschwanden2016} for a recent review).

While this success story resulted in a broad range of models showing SOC available today, the original BTW model remains an active field of research due its simplicity and its intriguing mathematical properties, rendering it the archetypical model for SOC \cite[p.~1]{Paoletti2014}. The BTW sandpile model is a cellular automaton defined on a rectangular $N\times M$ domain $\Gamma\subset\mathbb{Z}^2$ of the standard square lattice $\mathbb{Z}^2$ (Figure~\ref{Fig1}A) \cite{Bak1987}. Each vertex $(i,j)$ of the domain carries a non-negative number $c_{i,j}$ of particles (``grains of sand''), with $C\in\mathbb{N}_{\geq 0}^{N\times M}$ referred to as the configuration of the sandpile. Starting from some initial configuration, particles are slowly dropped onto vertices chosen at random. When during this process the number $c_{i,j}$ of particles of any vertex exceeds three, this vertex becomes unstable and ``topples'', decreasing the number of its particles by four and increasing the number of particles of each of its direct neighbors by one. Thus, toppling of vertices in the interior of the domain conserves the total number $|C|$ of particles in the sandpile, whereas toppling of vertices at the sides and the corners of the domain decreases the total number by one and two, respectively. The redistribution of particles due to the toppling of a vertex can render other vertices unstable, resulting in subsequent topplings in a process referred to an ``avalanche''. Due to the loss of particles at the boundaries of the domain, this process eventually ends \cite[Theorem~1]{Creutz1990}, and the ``relaxed'' sandpile reaches a stable configuration which is independent of the specific order in which the vertices were toppled \cite[p.~13]{Paoletti2014}. The distribution of avalanche sizes -- the total number of topplings after a random particle drop -- follows a power law \cite{Bak1987} and is thus scale-invariant. However, the critical exponent for this power law is yet unknown \cite{Bhupatiraju2017}.

\begin{figure}[htb]
	\centering
	\includegraphics[width=0.67\linewidth]{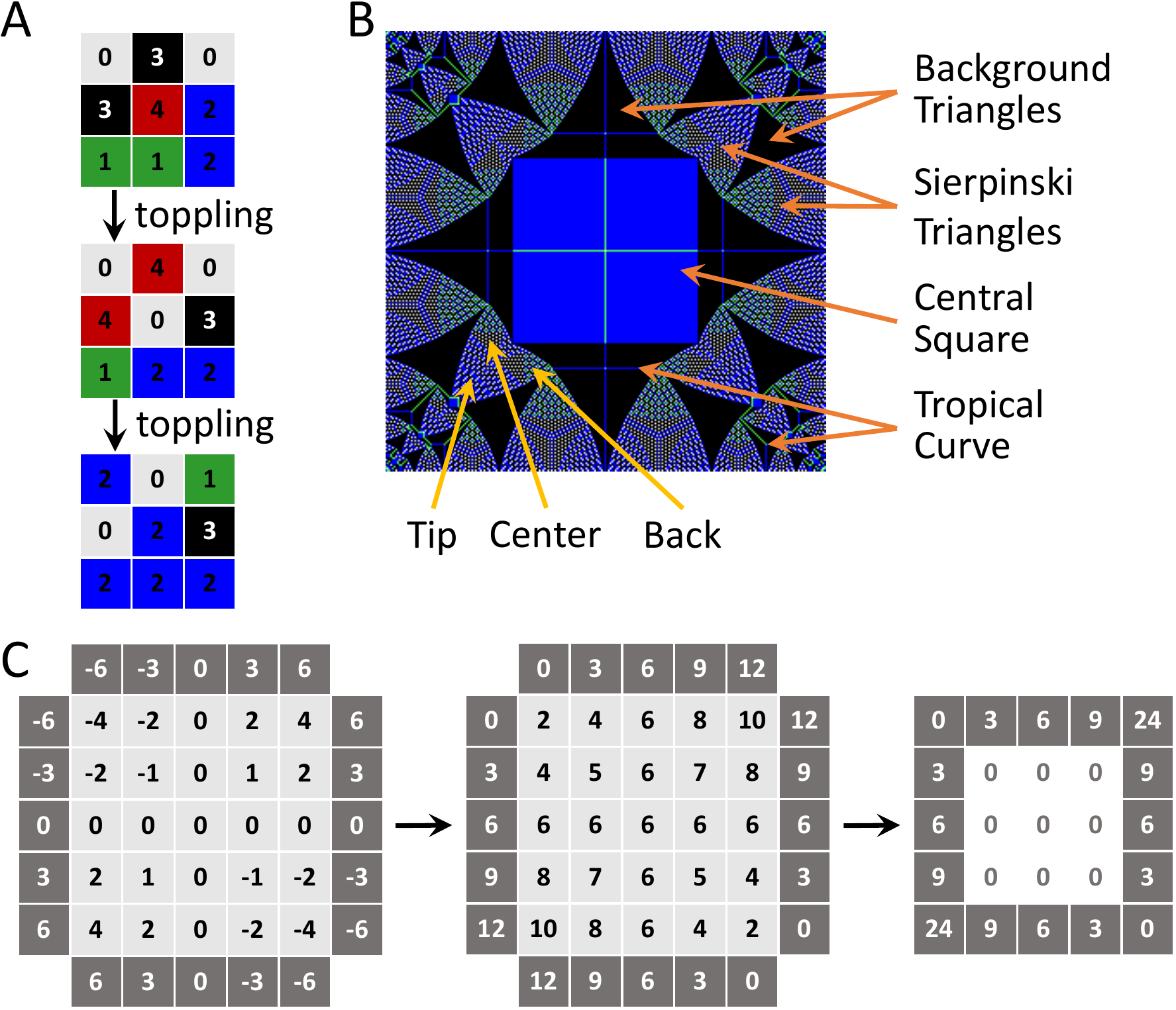}
	\caption{The BTW sandpile model, the sandpile identity and the construction of potentials. 
	A) Each vertex of the $N\times M$ domain (here: $3\times 3$) carries a non-negative number of particles, and additional particles are dropped onto randomly chosen vertices. If during this process a vertex happens to carry four or more particles, it becomes unstable and topples, decreasing the number of its particles by four and increasing the number of particles carried by each of its (four or less) neighbors by one. Thus, the toppling of a vertex in the interior of the domain conserves the total number of particles in the sandpile, while the toppling of a vertex at the edge or the corner of the domain decreases the total number of particles by one, respectively two. The toppling of one vertex can render other, previously stable vertices unstable, resulting in an ``avalanche'' of subsequent topplings. Every avalanche eventually stops, resulting in a stable configuration which is independent from the order in which vertices were toppled \cite{Creutz1990}. 
	B) The sandpile identity on a $255\times 255$ square domain. White pixels represent vertices carrying zero particles, while green, blue and black pixels represent vertices carrying one, two or three particles, respectively. We distinguish between three different two-dimensional patches in the sandpile identity (orange arrows): (i) The central square consists of vertices carrying two particles each; (ii) background triangles consist of vertices carrying three particles each; and (iii) Sierpinski triangles \cite[p.~109ff]{Paoletti2014}. The latter are composed of three different patterns (yellow arrows), to which we refer to as the tips, centers and backs of the Sierpinski triangles. Additionally, thin one-dimensional ``curves'' or ``strings'', referred to as tropical curves, occur in the sandpile identity, slightly disturbing the patches they cross.
	C) To construct the potential for a harmonic (here: $h_{ij}^{2a}= i j$) on a given $N\times M$ rectangular (here: $5\times 5$), the domain is first extended by adding an extra vertex to each outward facing side of a boundary vertex. To each vertex of the extended domain, we then associate the corresponding (possibly negative) value of the harmonic function at this position, with the origin of the harmonic function assumed to lie at the center of the domain for both $N$ and $M$ odd, or at a vertex close to the center otherwise. Subsequently, we subtract the minimal value of the harmonic function restricted to the extended domain (here: $-6$) from the values associated to each vertex. If, after this step, the greatest common divisor of all values is greater than one, we divide by this number. Finally, the potential is constructed by ``folding'' the vertices with which the original domain was extended into a new empty domain. At the corners where two ``folded vertices'' overlap, the sum of their values is taken to determine the corresponding value of the potential.
	}
	\label{Fig1}
\end{figure}

Besides SOC, the sandpile model possesses several mathematical properties not only simplifying its analysis, but being of specific interest themselves.
The set of recurrent configurations of the BTW model -- all stable configurations which can be reached from any other configuration by dropping particles -- form an abelian group \cite{Dhar1995}. The identity of this group, the sandpile or Creutz identity -- after Michael Creutz who first studied it in depth \cite{Creutz1990} -- shows a remarkably complex self-similar fractal structure composed of patches covered with periodic patterns (``textures'', see Figure~\ref{Fig1}B) which is still not completely understood. 
For some configurations different to the sandpile identity, scaling limits for infinite domains were shown to exist, and the patches visible in these configurations as well as their robustness was analyzed \cite{Pegden2013,Levine2016,Pegden2017}. Corresponding results for the sandpile identity -- like a closed formula for its construction -- are still missing \cite[p.~61]{Paoletti2014}, even though recently a proof for the scaling limit of the sandpile identity was announced \cite{Sportiello2015}. At the time of writing, however, only rigorous proofs for some specific structural aspects of the sandpile identity are available \cite{LeBorgne2002}.
For example, the thin (usually one pixel wide) curves or ``strings'' visible in the identity (Figure~\ref{Fig1}B) were recently identified as tropical curves \cite{Caracciolo2010,Kalinin2015,Kalinin2016,Kalinin2017}, structures from tropical geometry arising e.g. in some areas of string theory and statistical physics.

In this article we study a -- to our knowledge -- yet unknown property of the sandpile model, namely the evolution of the sandpile identity under harmonic fields externally imposed by deterministically or stochastically dropping particles on boundary vertices of the domain.
We show that such harmonic fields induce cyclic dynamics of the sandpile identity through the abelian group, smoothly transforming individual patches and tropical curves, mapping them onto one another or merging them into different objects. Specifically, we show that first order harmonic fields smoothly map tropical curves onto one another. Second order harmonic fields resemble stretching actions, map the self-similar patches constituting the sandpile identity onto one another, and form new two-dimensional patches out of many one-dimensional tropical curves. Third order harmonic fields resemble translational actions, and display the emergence of several new fractal structures formed out of patches with novel periodic patterns. Finally, fourth order harmonics resemble zooming actions, display the emergence of many regularly spaced fractal structures each restricted to some region of the domain, eventually forming what we will refer to as hyper-fractals. We introduce an extended analogue of the sandpile model where each vertex at the domain boundary is allowed to carry a real number of particles. For this extended model, we show that harmonic fields define closed geodesics and thus provide a set of universal coordinates for sandpile groups corresponding to different domains. Since there exist a natural inclusion of the original sandpile group into the extended one and a floor-projection from the extended group to the original one, this directly implies that the sandpile group admits a natural renormalization.
We subsequently extend our theory to super-harmonic fields, and show that it is possible to induce harmonic and super-harmonic fields by simple Markov processes. The stochastic dynamics resulting from the latter show remarkable robustness, allow to encode information into seemingly random configurations, and to decode this information with a simple and memoryless process.

\section{Results}
\subsection{Motivation}
In this article, we study the evolution of the sandpile identity under integer valued harmonic fields $H$. To motivate our study, recall that the stable configuration $C=(C^u)^\circ$ reached when relaxing an unstable configuration $C^u$, with $(\cdot)^\circ$ the relaxation operator corresponding to a series of topplings resulting in a stable sandpile, can be expressed in terms of the toppling function $H$, where $h_{i,j}$ quantifies how often the vertex $(i,j)$ toppled while relaxing the sandpile from $C^u$ to $C$ \cite{Fey2010}:
\begin{equation}\label{topplingEq}
C=(C^u)^\circ=C^u+\Delta H,
\end{equation}
with $\Delta$ the discrete Laplace operator defined by
\begin{equation*}
(\Delta H)_{i,j}=h_{i+1,j}+h_{i-1,j}+h_{i,j+1}+h_{i,j-1}-4h_{i,j}.
\end{equation*}
Note that we adopt the convention to set $h_{i,j}=0$ for any $(i,j)$ outside of the domain. 

In general, it is non-trivial to find the toppling function $H$ corresponding to the relaxation of a given unstable configuration $C^u$ without performing the relaxation itself. However, assume that it is possible to construct $C^u$ by adding $x_{ij}$ particles to each vertex $(i,j)$ of some ``initial'' recurrent configuration $C^0$ such that $H$ becomes harmonic in the interior of the domain, i.e. such that $(\Delta H)_{ij}=0$ for all vertices $(i,j)$ in the interior. Then, it follows from Eq.~\ref{topplingEq} that the relaxation of $C^u$ only changes the particle numbers of vertices at the boundary of the domain, and that $c_{i,j}=c^u_{i,j}$ for all vertices $(i,j)$ in the interior. Now, instead of adding all particles in $X$ at once resulting in general in a big avalanche, we could equivalently add one particle after the other, relaxing the sandpile every time a vertex becomes unstable. Since the addition of particles and the toppling operator commute \cite{Creutz1990}, we would still arrive at the same final configuration $C$, but during this process we would pass several intermediate, stable configurations. How do these intermediate configurations look like, and how should we choose $C^0$, $X$ and the order in which we add the particles $x_{ij}$?

Given these questions, the idea of this article can be easily explained: We start at an initial configuration $C^0=I$ corresponding to the sandpile identity $I$, we add $x_{ij}=(-\Delta H)_{ij}$ particles to each vertex, and we add these particles such that particle additions to the same vertex are uniformly spaced in time.
The motivation behind the first two ingredients of our algorithm is as follows: Recall that $H$ should be harmonic in the interior of the domain, which directly implies that we only add particles to the boundaries of the domain, i.e. $x_{ij}=(-\Delta H)_{ij}=0$ for every vertex $(i,j)$ in the interior. Then, when setting $C^u=C^0+X$, $C^0=I$ and $X=-\Delta H$ in Eq.~\ref{topplingEq}, we obtain
\begin{equation*}
(I+X)^\circ=I-\Delta H+\Delta H=I.
\end{equation*}
In other words, by adding only particles to the boundary of the domain, we arrive again at the sandpile identity after we added \textit{all} particles. However, since we add particles one by one and relax the sandpile after each step, we will see several intermediate, stable configurations during this process. Intuitively, these intermediate configurations will first become more and more dissimilar from the sandpile identity, before finally converging back to it. How these ``oscillatory dynamics'' look like in detail is what we analyze in this article.

The procedure described above is reminiscent of a method proposed nearly $30$ years ago by Michael Creutz to construct the sandpile identity stating from the configuration $c^0_{ij}=0$ where each vertex carries zero particles. \cite{Creutz1990}. In this method, a single particle is added to each vertex at the sides of the rectangular domain, two particles to each vertex at a corner, and the sandpile is relaxed \cite{Creutz1990}. These steps are repeated several times, and it turns out that the sandpile will eventually convergence to the identity \cite{Creutz1990}. Notably, this corresponds to adding in each step $X=-\Delta H^0$ to the configuration, with $h^0_{ij}=1$ the constant harmonic function. However, as it will become clear in the following, the constant harmonic function $H^0$ will not result in any non-trivial dynamics, and -- to our knowledge -- nobody yet analyzed the effect of adding particles $X$ according to higher order harmonics functions, and specifically not to add these particles one-by-one instead of all-at-once.

\subsection{Harmonic dynamics of the sandpile identity}
Let $H$ be a non-negative integer valued harmonics function on the lattice points in the interior of an $N\times M$ rectangular domain, taking non-positive values of the Laplacian at the boundaries, i.e. $h_{ij}\geq 0$ and $(\Delta H)_{ij}\leq 0$ for all vertices $(i,j)$, and $(\Delta H)_{ij}=0$ for all vertices in the interior. Furthermore, let $X^H=-\Delta H$ be the negative of the discrete Laplacian of $H$, to which we refer as the potential of the harmonic function in the following. Note, that this implies that $x^H_{ij}=0$ for $(i,j)$ in the interior of the domain, and $x^H_{ij}\geq 0$ at the boundary.

We then define the dynamics $I^H(t)$ of the sandpile identity $I$ at time $t\geq 0$ induced by the harmonic field $H$ by
\begin{align}\label{dynDef}
I^H(t)=(I+\lfloor tX^H\rfloor)^\circ,
\end{align}
with $\lfloor .\rfloor$ the floor function. We refer to the Discussion section for possible modifications of this algorithm, e.g. for using the ceil or round function instead of the floor function.

Since the Laplace operator is linear, the weighted sum $k^1H^1+k^2H^2$, with $k^1,k^2\in\mathbb{N}_{\geq 0}$ of two harmonic functions satisfying the conditions above is harmonic, too, and also satisfies the other conditions. Due to the non-negativity constraints, the set of all $H$ however does not form a vector space. Furthermore, we note that, in general, $\lfloor tX^{H^1}\rfloor+\lfloor tX^{H^2}\rfloor$ is different from $\lfloor tX^{H^1}+tX^{H^2}\rfloor$. Despite this non-linearity, we show at the end of this section that the action induced by the sum of two harmonic fields is well approximated by the sum of the actions of the individual harmonic fields.

To restrict our analysis to a reasonably small and computationally feasible set of harmonic functions fulfilling the requirements above, we constructed nine harmonic functions forming the basis of the vector space of all (possibly negative valued) harmonic functions on the whole lattice of order four or less. Specifically, there exists one of such basis vectors for order zero harmonics, and two additional basis vectors for each additional order. The set of all these basis vectors is listed in Table~\ref{table1}. The algorithm to construct from this basis non-negative harmonic functions for a given $N\times M$ rectangular domain fulfilling the requirements above is depicted in Figure~\ref{Fig1}C. In short, this algorithm places the origin of the harmonic function onto the center of the rectangle, and then keeps adding the constant harmonic function $h^0_{ij}=1$ until $h_{ij}+kh^0_{ij}\geq 0$ and $x_{ij}=-(\Delta (H+kH^0))_{ij}\geq 0$ for all vertices $(i,j)$. 
Since, due to this construction, the harmonics corresponding to the same basis vectors only differ by a constant value for different domain sizes, we will refer in the following to the harmonic and its corresponding basis vector interchangeably.

\begin{table}
\begin{tabularx}{0.99\linewidth}{lllX}
		\textbf{Order} 				& \textbf{Identifier} & \textbf{Harmonic} & \textbf{Action} \\
		\hline
		$0$ 									& $H^{0}$ 	& $1$																						& None\\
		\multirow{2}{*}{1} 		& $H^{1a}$ 	& $i$																						& Hor. movement of tropical curves\\
													& $H^{1b}$	& $j$																						& Vert. movement of tropical curves\\
		\multirow{2}{*}{2} 		& $H^{2a}$	& $i j$																					& Diagonal stretching\\
													& $H^{2b}$	& $i^2-j^2$																			& Vert.\&hor. stretching\\
		\multirow{2}{*}{3} 		& $H^{3a}$	& $i^3-3i j^2$																	& Horizontal translation + stretching\\
													& $H^{3b}$	& $j^3-3j i^2$																	& Vertical translation + stretching\\
		\multirow{2}{*}{4} 		& $H^{4a}$	& $i^4-6 i^2 j^2 + j^4 - i^2 - j^2$							& Zooming\\
													& $H^{4b}$	& $i^3 j-i j^3$																	& Zooming + rotation\\
		\hline\\
	\end{tabularx}
\caption{Basis vectors of the set of discrete harmonic functions of order four or less used in this study. The column ``action'' provides an intuitive interpretation of the corresponding sandpile identity dynamics $I^H(t)$. Due to the complexity of the dynamics, this interpretation is however not in a one-to-one correspondence with common definitions of the respective action. Note that the discrete Laplacian leads to a slightly different set of basis vectors than the continuous one. For example, $\Delta_c(y^4-6 y^2 x^2 + x^4-x^2-y^2)=-4$ while $\Delta(i^4-6 i^2 j^2 + j^4- i^2 - j^2)=0$.}
\label{table1}
\end{table}

\begin{figure}[htb]
	\centering
	\includegraphics[width=0.73\linewidth]{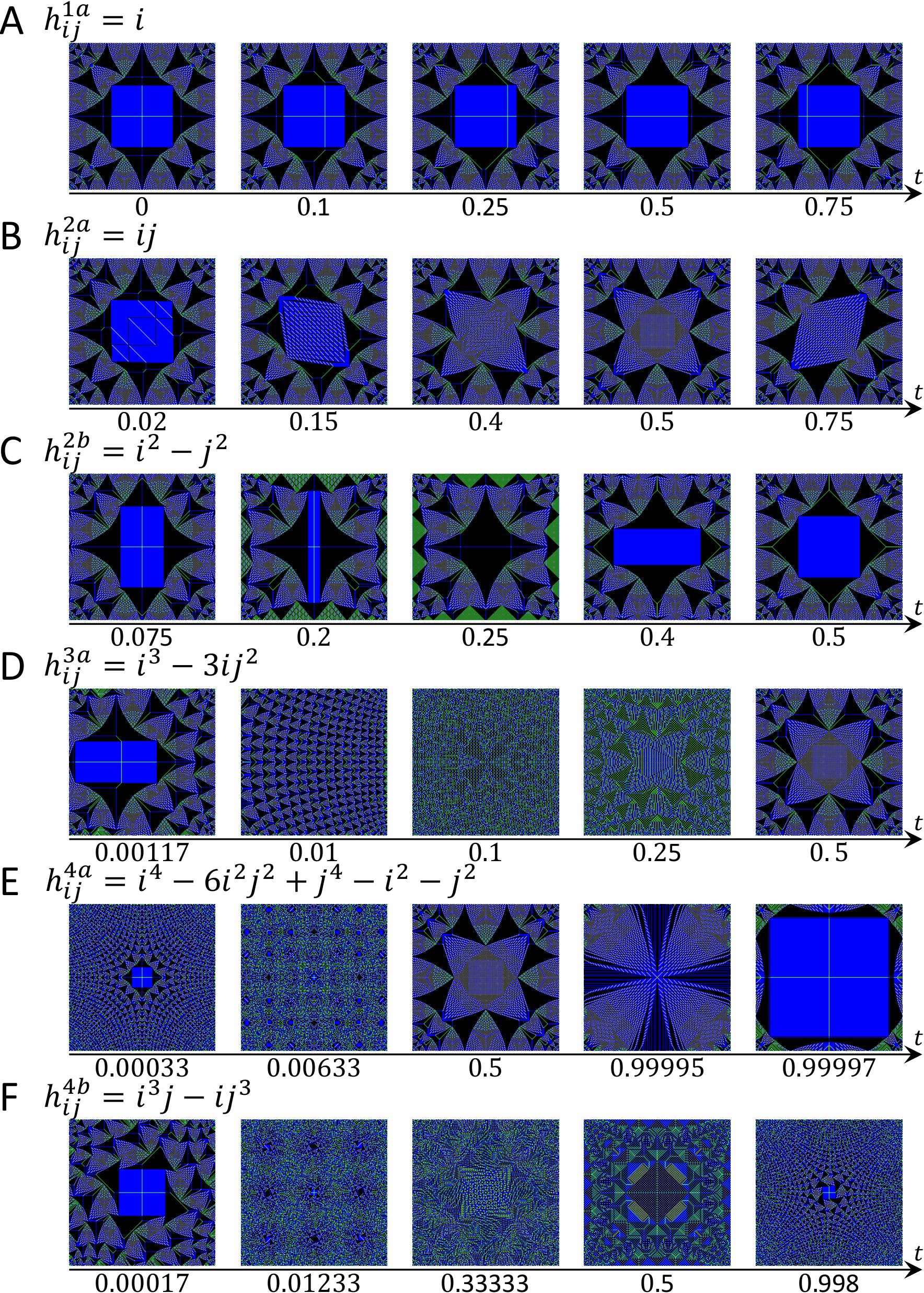}
	\caption{Sandpile identity dynamics induced by harmonic fields of order four or less on a $255\times 255$ square. For odd-order harmonics, only one of the two dynamics is shown since the other one is identical up to switching of axes. For each harmonic, the sandpile dynamics at five time points of specific interest are shown. The first frame ($t=0$) displayed for the dynamics induced by $H^{1a}$ corresponds to the sandpile identity, where the dynamics induced by all harmonics start $(t=0)$ and end $(t=1)$. The six sub-figures correspond to the harmonics A) $h_{ij}^{1a}=i$, B) $h_{ij}^{2a}=ij$, C) $h_{ij}^{2b}=i^2-j^2$, D) $h_{ij}^{3a}=i^3-3ij^2$, E) $h_{ij}^{4a}=i^4-6i^2j^2+j^4-i^2-j^2$, and F) $h_{ij}^{4b}=i^3j-ij^3$.
	}
	\label{Fig2}
\end{figure}

Before discussing the sandpile dynamics, let us first state the following lemma:
\begin{lemma}\label{lemma:periodicity}
The dynamics  $I^H(t)$ of the sandpile identity have periodicity 1, i.e. $I^H(t+1)=I^H(t)$ for all $t\geq 0$.
\end{lemma}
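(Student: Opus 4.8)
The plan is to reduce the periodicity statement to a single invariance identity and then settle that identity with the group structure of recurrent configurations. First I would exploit that $X^H=-\Delta H$ is integer valued, so that for every vertex the floor splits as $\lfloor (t+1)x^H_{ij}\rfloor=\lfloor tx^H_{ij}\rfloor+x^H_{ij}$, whence $\lfloor (t+1)X^H\rfloor=\lfloor tX^H\rfloor+X^H$ as configurations. Substituting into the definition gives $I^H(t+1)=(I+\lfloor tX^H\rfloor+X^H)^\circ$. Because $X^H\ge 0$ (it is the non-negative potential), I may invoke the abelian property that addition of particles commutes with relaxation \cite{Creutz1990}, in the form $(A+B)^\circ=(A^\circ+B)^\circ$ for $B\ge 0$; with $A=I+\lfloor tX^H\rfloor$ and $B=X^H$ this yields $I^H(t+1)=(I^H(t)+X^H)^\circ$. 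Thus the whole lemma reduces to the single claim that adding the potential $X^H$ to $I^H(t)$ and relaxing returns $I^H(t)$.

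To prove that claim, write $C=I^H(t)=(I+\lfloor tX^H\rfloor)^\circ$ and note that $C$ is recurrent, being the relaxation of the identity after adding the non-negative configuration $\lfloor tX^H\rfloor$. The key observation is that $X^H=-\Delta H=\Delta(-H)$ with $-H$ an integer vector, so $X^H$ lies in the integer column span of the (reduced) Laplacian, i.e. in the toppling lattice $\operatorname{Im}\Delta$. Consequently $C+X^H$ and $C$ represent the same coset of $\mathbb{Z}^{N\times M}/\operatorname{Im}\Delta$, hence the same element of the sandpile group \cite{Dhar1995}. Since $X^H\ge 0$, the relaxation $(C+X^H)^\circ$ is again recurrent, and because each coset contains exactly one recurrent configuration, $(C+X^H)^\circ$ must coincide with the recurrent representative of $[C]$, which is $C$ itself. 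Combining with the first paragraph gives $I^H(t+1)=(C+X^H)^\circ=C=I^H(t)$, and the base case $I^H(0)=I$ is immediate.

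The main obstacle, and the point I would treat most carefully, is the transition from the motivating identity $(I+X^H)^\circ=I$ to its analogue $(C+X^H)^\circ=C$ for the intermediate configuration $C$: the toppling function realizing the relaxation of $C+X^H$ need not literally equal $H$, so one cannot simply quote Eq.~\ref{topplingEq} with $H$ in place of the toppling function. The clean way around this is precisely the coset/uniqueness argument above, which uses only that $X^H\in\operatorname{Im}\Delta$ together with the fact that the recurrent configurations form a transversal for $\mathbb{Z}^{N\times M}/\operatorname{Im}\Delta$ under the group law $(\,\cdot+\cdot\,)^\circ$. Alternatively, one could try to argue directly via the least-action (abelian) principle that $H$ is a legal toppling function for $C+X^H$ and that $C+X^H+\Delta H=C$ is stable; but verifying legality of the toppling sequence is exactly where the subtlety hides, so I would prefer the group-theoretic route.
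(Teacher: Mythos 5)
Your proof is correct. You reach the key reduction $I^H(t+1)=(I^H(t)+X^H)^\circ$ the same way the paper implicitly does (integrality of $X^H$ gives $\lfloor(t+1)X^H\rfloor=\lfloor tX^H\rfloor+X^H$, then the abelian property), but you settle the resulting one-step invariance $(C+X^H)^\circ=C$ by a different mechanism: you observe $X^H=\Delta(-H)$ lies in the toppling lattice, invoke the identification of the sandpile group with $\mathbb{Z}^{\Gamma}\slash\Delta\mathbb{Z}^{\Gamma}$, and use uniqueness of the recurrent representative in each coset. The paper instead stays multiplicative: starting from Creutz's relation $a_{ij}^{4}=a_{i-1,j}a_{i+1,j}a_{i,j-1}a_{i,j+1}$ and invertibility of the $a_{ij}$ on recurrent configurations, it assembles the product $\prod_{i,j}a_{ij}^{x_{ij}}=\prod_{i,j}a_{ij}^{-(\Delta H)_{ij}}=E$ directly, without ever quoting the quotient-lattice description or the unique-representative property. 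The two arguments encode the same underlying fact --- the relations satisfied by the addition operators are exactly those generated by the Laplacian --- so neither is more general, but yours outsources the work to Dhar's structure theorem while the paper's is self-contained modulo Creutz's operator identities. A small merit of your write-up is that it makes explicit both the floor-splitting step and the reduction to a single invariance identity, which the paper passes over silently; your closing caution that one cannot simply reuse $H$ as the literal toppling function of $C+X^H$ is also well placed, since that is precisely the gap the coset (or operator) argument is there to close.
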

\begin{proof}
The proof follows widely \cite{Creutz1990}, which indeed directly provides the proof for the special case of the constant harmonic function $H^0$. 
As observed in \cite{Creutz1990}, adding four particles to vertex $(i,j)$ and relaxing results in the same configuration as adding one particle to each of its direct neighbors and relaxing. This can be expressed as $a_{ij}^{4}=a_{i-1,j}a_{i+1,j}a_{i,j-1}a_{i,j+1}$, with $a_{ij}$ the operator corresponding to adding one particle to vertex $(i,j)$ and relaxing. Note that we adopt the convention to set $a_{ij}=E$ to the identity operator $E$ for every $(i,j)$ outside of the domain.
 Note that since $I$ is recurrent, $I(t)$ is recurrent for all $t\geq 0$, too. We can thus restrict our analysis to recurrent configurations where $a_{i,j}^{-1}$ exists (\cite{Creutz1990}, Theorem 3), and obtain $a_{ij}^{4}a_{i-1,j}^{-1}a_{i+1,j}^{-1}a_{i,j-1}^{-1}a_{i,j+1}^{-1}=E$ (\cite{Creutz1990}, Eq.~15). 
Similarly, if we add $4\hat{h}_{ij}$ particles to each vertex $(i,j)$, we obtain $\prod_{i,j}a_{ij}^{4\hat{h}_{ij}}a_{i-1,j}^{-\hat{h}_{ij}}a_{i+1,j}^{-\hat{h}_{ij}}a_{i,j-1}^{-\hat{h}_{ij}}a_{i,j+1}^{-\hat{h}_{ij}}=\prod_{i,j}a_{ij}^{-(\Delta\hat{H})_{ij}}=\prod_{i,j}a_{ij}^{x_{ij}}=E$. Finally, $I(t+1) = (I(t)+X^H)^\circ = \left(\prod_{i,j}a_{ij}^{x_{ij}}\right)I(t)=I(t)$.
\end{proof}

In the following, we discuss the sandpile identity dynamics induced by different harmonic functions on a $255\times 255$ square domain. Since on a square, each pair of odd-order harmonic functions results in the equivalent dynamics up to rotation of the sandpile by $90^\circ$, we furthermore only discuss the dynamics induced by one of them. The sandpile identity dynamics for other domains will be discussed at the end of this section.
Because the constant harmonic field $h^0_{ij}=1$ does not lead to any non-trivial dynamics (which directly follows from Eq~\ref{dynDef}), we start the discussion with harmonics functions of order one. In this discussion, we refer to the various two-dimensional patches of the sandpile identity by the names indicated in Figure~\ref{Fig1}A. While, here, we can only show the sandpile identity dynamics at certain times, movies providing a better impression of the continuous dynamics will be made available on public video platforms, and linked on the project's webpage \url{http://langmo.github.io/interpile/}. At the same webpage, an open-source implementation of the algorithms used to generate these movies is available for download.

The sandpile identity dynamics induced by the first order harmonic function $h^{1a}_{ij}=i$ (Figure~\ref{Fig2}A) correspond to a smooth horizontal translation of the tropical curves from left to right, leading to the replacement of each of the three vertical tropical curves -- located originally in the central square and the two adjacent background triangles (Figure~\ref{Fig1}B) -- by their corresponding neighbor to the left after one full period. While these dynamics are rather simple, we note that while a tropical curve passes through a patch, the location of the latter is slightly shifted and its shape slightly changes.  

The sandpile identity dynamics induced by $h^{2a}_{ij}=i j$ (Figure~\ref{Fig2}B) correspond to a smooth ``stretching'' of the central square in the direction of one diagonal, and to a compression in the direction of the other. During this process, the central square gradually changes its pattern by the subsequent action of tropical curves. Thereafter, the central square splits into the tips of two Sierpinski triangles which continue traveling on the diagonal to the corners of the plate. On their way, these tips combine with the rest of the patterns of the Sierpinski triangle which resolved from the inner most Sierpinski triangles on the other diagonal. The tips of the latter continue moving to the center of the domain and eventually form a new central square. Thus, in the course of one period, the self-similar patches of the sandpile identity are smoothly transformed onto each other. These smooth transformations of the patches seem to be accomplished by the repeated action of tropical curves passing through them. 

Similar to the sandpile identity dynamics induced by $h^{2a}_{ij}=i j$, the dynamics induced by $h^{2b}_{ij}=i^2-j^2$ (Figure~\ref{Fig2}C) resemble stretching actions, however, along the horizontal/vertical axes instead of the diagonal ones. Different to the dynamics induced by $H^{2a}$, the central square is ``disassembled'' in horizontal direction from the outside into many tropical curves, and simultaneously grows vertically. The central square thus becomes a rectangular of shrinking width and growing height. At $t=0.25$, the width of the rectangular approaches zero, leading to the fusion of the two background triangles which were originally to the left and the right of the central square. Subsequently, a rectangle is re-established in the center of the domain by the ``fusion'' of many tropical curves entering the domain from the top and the bottom. This rectangular eventually converges to a patch resembling the central square of the identity at $t=0.5$. While the configuration at $t=0.5$ is very similar to the sandpile identity, important features like the two tropical curves crossing in the center of the domain are missing. After $t=0.5$, the dynamics go through another, similar cycle as before, finally reaching the sandpile identity configuration at $t=1$. During each of these two ``half-cycles'', the Sierpinski triangles which were originally above and below the central square leave the domain at its top and bottom boundaries, and are replaced by Sierpinski triangles which were formerly at the diagonals. Similarly, the Sierpinski triangles which were originally at the diagonals are replaced by the ones originally at the left and right of the central square, which are again replaced by new Sierpinski triangles entering the domain at its right and left boundaries. 

For times close to $t=0$ and $t=1$, the dynamics induced by $h^{3a}_{ij}=i^3-3ij^2$ (Figure~\ref{Fig2}D) resemble an horizontal translation of the sandpile identity, overlaid by stretching dynamics similar to the ones induced by $H^{2b}$. Specifically, new Sierpinski triangles enter the domain at its right boundary, leading to regular repetitive configurations of more and more, smaller and smaller Sierpinski triangles filling the whole domain. When the size of each individual Sierpinski triangle approaches one vertex, the dynamics enter extended periods of seemingly random configurations. However, at regular times corresponding to multiples of $1/12$ or other simple fractions, new fractal configurations emerge and subsequently disappear. At $t=1/3$ and $t=2/3$, these fractal configurations are very similar, but not identical, to the sandpile identity (compare $t=0.5$ in Figure~\ref{Fig2}C), and, at $t=0.5$, the configuration is similar but not identical to the one observed for $H^{2a}$ at $t=0.5$ (compare Figure~\ref{Fig2}B). Interestingly, the relative locations at the boundaries where new patches enter and leave the domain in the dynamics induced by $H^{3a}$ seem to be consistent with the ones induced by $H^{2b}$ (see above).

The dynamics induced by $h^{4a}_{ij}=i^4-6 i^2 j^2 + j^4- i^2 - j^2$ (Figure~\ref{Fig2}E) initially seem to ``zoom out'' from the sandpile identity. During this process, more and more patches appear at relative positions seemingly consistent with the ones observed for the dynamics induced by $H^{2b}$ and $H^{3a}$. After some time, several regularly spaced fractal structures resembling the sandpile identity appear, each restricted to a region of the domain. Each of these fractals shrinks in size while converging to the center of the domain. Thus, the domain is slowly filled with dozens or hundreds of such local fractal structures. When the size of each of these fractal structures approaches one vertex, the dynamics enter extended regions of seemingly noisy configurations, except for a small region around the center of the domain which remains relatively regular. From these noisy configurations, accentuated regular fractal structures filling the whole domain emerge at times corresponding to multiples of $1/12$ or other simple fractions. Shortly before these fractal structures emerge, small patches originally located at the center of the domain quickly expand and fill nearly the whole domain with a single, regular pattern. Subsequently, this pattern changes due to the repeated action of tropical curves and then shrinks in size, giving rise to a fractal structure of which the pattern eventually become the center. For example, when approaching $t=1$, four Sierpinski triangles emerge, aligned to the diagonals and with tips touching one another in the domain center ($t=0.99995$ in Figure~\ref{Fig2}E). The dynamics then ``zoom into'' the tips of these triangles, finally resulting in a configuration where the central square of the sandpile identity nearly fills the whole domain ($t=0.99997$ in Figure~\ref{Fig2}E). Subsequently, this central square shrinks and more and more of the other patches constituting the sandpile identity become visible, finally forming the sandpile identity at $t=1$. Due to the observation that in the dynamics induced by $H^{4a}$, local fractal structures appear which converge to the center of the domain and presumably form the regular configurations emerging at times corresponding to multiples of $1/12$ or other simple fraction, we conveniently refer to the latter configurations as hyper-fractals in the following.

The sandpile identity dynamics induced by $h^{4b}=i^3j-ij^3$ (Figure~\ref{Fig2}F) initially also seems to ``zoom out'' from the sandpile identity. These dynamics are however overlaid by some a rotational action around the center of the domain. Initially, this results in a slow rotation of the central square, while all other patches become skewed. Similar to $H^{4a}$, local fractals resembling skewed copies of the sandpile identity enter the domain at its boundaries and slowly converge to its center while shrinking in size. The dynamics then enter extended regions of seemingly random configurations -- again except for a small region around the center of the domain -- interrupted by regular hyper-fractal configurations emerging at times corresponding to multiples of simple fractions. The dynamics finally seem to ``zoom into'' the sandpile identity, instead of ``zooming out'' into the sandpile identity as observed for $H^{4a}$. Interestingly, the hyper-fractal configuration at $t=0.5$ resembles the sandpile identity, however, with most of its self-similar patches being replaced by isosceles triangles (reminiscent of sandpile identities on pseudo-Manhattan domains, see \cite{Caracciolo2008}, Figure~6). This configuration also possesses a central square, however, rotated by $45^\circ$, indicating that a full period leads to the rotation of the central square by $90^\circ$. Finally, we note that the configurations at $t=1/3$ and $t=2/3$ also resemble -- when viewed from afar -- the structure of the sandpile identity. However, these configuration are not composed of patches with clearly defined boundaries, but rather of different patterns which smoothly shade into one another.

Since the computational requirements to determine the sandpile identity dynamics quickly increase with the order of the harmonic field inducing them, we decided to only analyze the dynamics induced by harmonics up to an order of four. Nevertheless, we took a short glimpse into the beginnings of the sandpile identity dynamics induced by the harmonic function $h^{5a}_{ij}=3i^5-30i^3j^2+15ij^4-10i^3$. The dynamics resemble the action of tensile forces, overlaid by zooming actions similar to the ones induced by $H^{4a}$. These tensile forces seem to lead to a normal tensile crack of the central square, i.e. a fracture perpendicular to the left side of the square. Similarly as for fourth order harmonics, also regularly spaced local fractal structures appear in the dynamics induced by $H^{5a}$, eventually leading to seemingly random configurations interrupted by regular configurations at times corresponding to multiples of simple fractions. Despite these initial observations, a detailed description of the dynamics induced by $H^{5a}$ will have to await future research.

Due to the floor function in Eq.~\ref{dynDef}, the sandpile identity dynamics are not linear in the harmonic fields inducing them. Nevertheless, when we analyzed the sandpile identity dynamics induced by linear combinations of different harmonic fields, we observed that the combined actions of two harmonic fields seems to be well described by the sum of their individual actions (Supplementary Figure~\ref{FigS1}A and B). Furthermore, when we applied the harmonic $H^{2a}$ to a configuration extracted from the dynamics induced by $H^{4a}$ at the time when multiple regularly spaced local fractal structures were clearly visible, each of these local fractal structures was separately transformed by stretching actions resembling the dynamics of the sandpile identity under the harmonic field $H^{2a}$ (Supplementary Figure \ref{FigS1}C, compare Figure~\ref{Fig2}B). Similarly, the hyper-fractal configurations occurring in the dynamics of $H^{4a}$ showed dynamics consistent with those of the sandpile identity when induced by $H^{2a}$ or $H^{2b}$ (Supplementary Figure \ref{FigS1}D and E). This indicates that the effect of the non-linearity induced by the floor function in Eq.~\ref{dynDef} is rather negligible, and that the action induced by the sum of two harmonics is well described by the sum of the actions induced by each individual harmonic field.

The sandpile identity dynamics on non-square domains closely resemble those on square domains (Supplementary Figures~\ref{FigS2} and \ref{FigS3}). Interestingly, the dynamics induced by $H^{4a}$ on a rectangular or circular domain (Supplementary Figures~\ref{FigS2}E and \ref{FigS3}E) seem to quickly re-establish the central square and the surrounding patches typical for the sandpile identity on a square domain (Figure~\ref{Fig2}E), even though the identities for these domains share only little similarities. When we further analyzed this effect, we observed that the dynamics induced by $H^{4a}$ seem to be always rather similar, even when we considered non-convex domains or domains having holes (Supplementary Figure~\ref{FigS4}). Specifically, for all tested domains, we saw the emergence of regularly spaced local fractal structures at approximately the same time, and these local fractal structures always resembled the structure of the sandpile identity on a square domain. 

\subsection{Effect of scaling the domain size}
\begin{figure}[htb]
	\centering
	\includegraphics[width=0.73\linewidth]{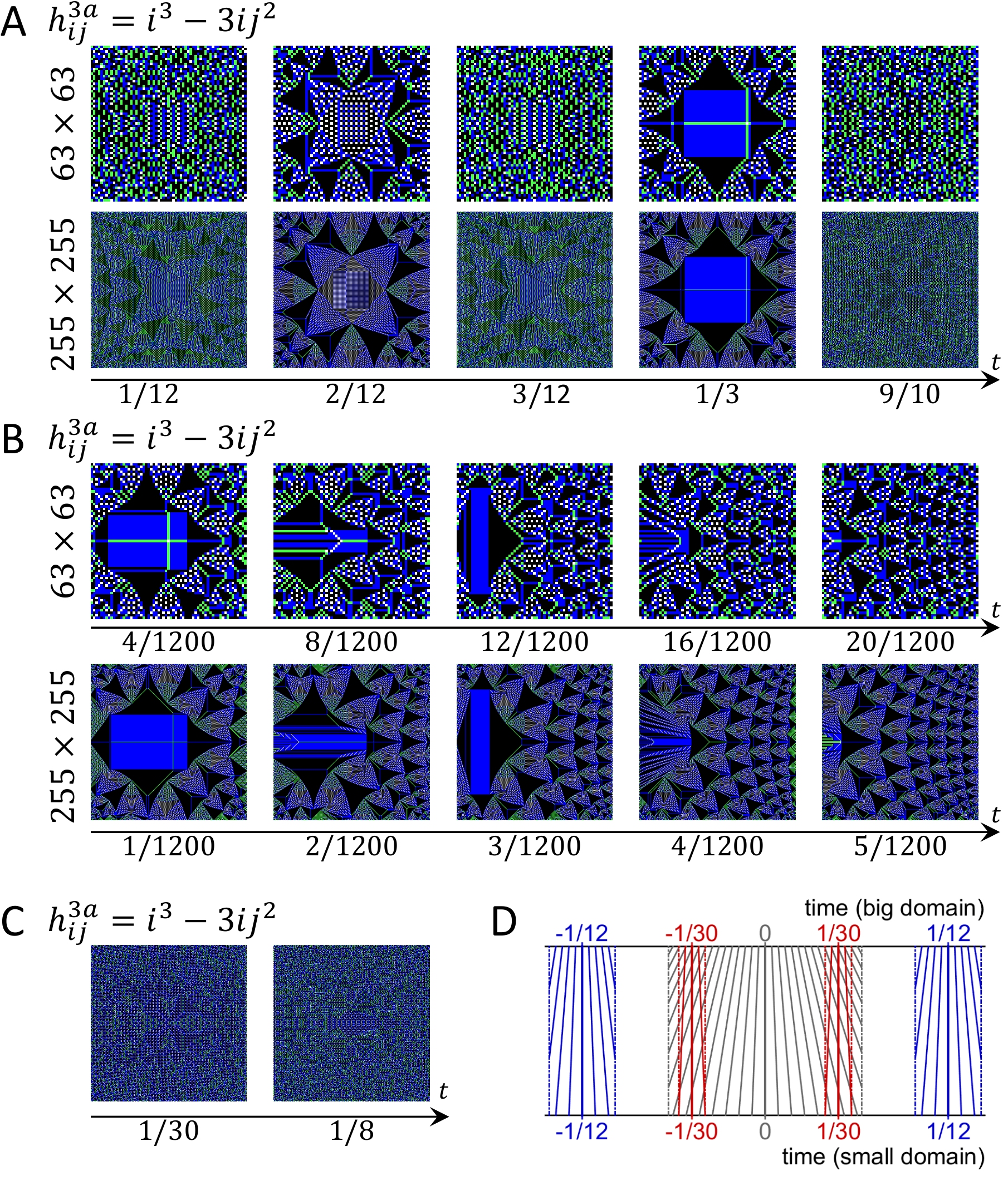}
	\caption{Effect of scaling the domain size on the sandpile identity dynamics induced by $h^{3a}_{ij}=i^3-3ij^2$ (top row: $63\times 63$, bottom row: $255\times 255$). 
	A) For different domain sizes, pronounced fractal configurations occur at the same times corresponding to multiples of simple fractions (here: multiples e.g. of $1/12$). These fractals are easier to visually detect the larger the domain size, and some are only visually detectable for sufficiently large domains (the one at $t=0.9$).
	B) The fractals described in (A) seem to ``entrain'' the dynamics of patches in their temporal vicinity. To map the positions of such entrained patches onto one another for different domain sizes, time has to be scaled by a factor proportional to the domain size (here: by a relative factor of $255/63\approx 4$), which is in contrast to the absolute time scale at which the fractal configurations themselves appear.
	C) For sufficiently large domains, weak fractals can appear in the vicinity of stronger ones (here: at $t=1/30$ and $t=1/8$).
	D) Qualitative model for the interplay between the different effects described in (A-C): A global time scale determines when fractal configurations occur. These fractals impose a time distortion proportional to the domain size in their temporal vicinity. The absolute durations during which these time distortions are ``detectable'' is similar for different domain sizes. The entrainment by weak fractals in the vicinity of stronger ones might lead to a super-position of different local time scales. 
	}
	\label{Fig3}
\end{figure}
In the last section, we discussed the sandpile identity dynamics on a given domain induced by harmonic fields of different orders. In this section, we analyze how the dynamics induced by the same harmonic field change when scaling the domain size. Due to space limitations, we focus our analysis on the dynamics induced by $H^{3a}$ on square domains. We will however generalize our observations to harmonics of different orders at the end of this section.

As shown in the last section, accentuated fractal configurations emerge in the dynamics induced by $H^{3a}$ from seemingly random configurations at times corresponding to multiples of simple fractions (e.g. at $t=1/12, 2/12,\ldots$). In the temporal vicinity of these fractal configurations, the dynamics seem to be ``entrained'' by these fractals, in the sense that new patches emerge at the boundaries of the domain seemingly extending the fractal structure, resulting in regular configurations until the size of these patches approach one vertex. 
To further analyze this effect, we compared the sandpile identity dynamics induced by $H^{3a}$ on $N\times N$ square domains for different domain sizes $N$. In the dynamics on such domains, we observed that similar accentuated fractal configurations appear at the same absolute times corresponding to multiples of $1/12$ or other simple fractions (Figure~\ref{Fig3}A). However, the smaller the domain size and the ``less simple'' the fraction, the less visually pronounced these fractals became ($t=1/12$ or $t=3/12$ in Figure~\ref{Fig3}A). In agreement with this observation, weak fractal configurations at times corresponding to multiples of some other fractions than $1/12$ were only visually detectable for large domain sizes ($t=9/10$, Figure~\ref{Fig3}A). This suggests that fractal configurations always emerge at the same absolute time independently of the domain size, and that even more of such fractal structures might become visible when further increasing the domain size.

A completely different picture emerged when we compared the sandpile identity dynamics for different domain sizes $N$ in the temporal vicinity of strong fractal configurations like the sandpile identity at $t=0$ (Figure~\ref{Fig3}B). Here, the dynamics corresponding to different domain sizes at the same absolute times were clearly different. However, when we scaled the time by a factor $N_1/N_2$ corresponding to ratio of the domain sizes, the dynamics in the vicinity of the strong fractal configurations seemed to coincide (Figure~\ref{Fig3}B). For example, if, for a square domain of size $63\times 63$, time runs faster by a factor of approximately $255/63\approx 4$ than for a domain of size $255\times 255$, the positions and sizes of the individual patches visible in the configurations of the two domains seemed to coincide. 

To interpret this ``time distortion'' in the vicinity of strong fractal configurations, consider that patches entering the domain at its boundary seem to extend the sandpile identity, but that the dynamics can enter periods characterized by seemingly noisy configurations only if the area of each of these individual patches has approached one vertex. Without any ``time distortion'' in the vicinity of strong fractal configurations, we would thus expect, for only large enough domains, to see regular configurations composed of small patches extending over the whole sandpile identity dynamics. In contrast, when we estimated the (absolute) time period during which the entrainment of the sandpile dynamics by the strong fractal configuration at $t=0$ corresponding to the sandpile identity was still observable, this period had similar -- or even the same -- lengths for different domain sizes. The local ``time distortion'' in the proximity of strong fractals thus seems to compensate the effect of the ``space distortion'' caused by scaling the domain size, in the sense that the dynamics enter periods of seemingly noisy configurations at approximately the same absolute times, independently of the domain size.
That there seems to be a ``time distortion'' in the vicinity of strong fractal configurations depending on the domain size becomes even less intuitive when considering that weak fractal configurations can also emerge in the vicinity of stronger ones -- for example at $t=1/30$ which is close to $t=0$ (Figure~\ref{Fig3}C). In Figure~\ref{Fig3}D, we depict a model incorporating all effects observed for the sandpile identity dynamics induced by $H^{3a}$, which proposes a super-position of different time lines to solve this apparent paradox. Due to computational limitations, the rigid evaluation of this model however remains a task for future research.

We may ask if similar effects as observed for the sandpile identity dynamics induced by $H^{3a}$ can also be observe for other harmonic fields. For $H^{1a}$, the sandpile identity dynamics describe the translation of tropical curves (Figure~\ref{Fig2}A). When we compare these dynamics between domains of different size, the tropical curves seem to have the same positions at the same absolute times independent of the domain size (Supplementary Figure~\ref{FigS5}A). The dynamics induced by $H^{2a}$ and $H^{2b}$ map self-similar patches onto one another. When comparing these dynamics between domains of different size, all patches seem to have the same position and shapes at the same absolute times, independently of the domain size (Supplementary Figure~\ref{FigS5}B). However, when we compare the positions of the tropical curves for different domain sizes, their positions were clearly different at the same (absolute) times. Only when we scaled time by a factor proportional to the relative domain size, the positions of the tropical curves coincided, resembling the relationship between the time scale of patch movement/transformation and the appearance of fractal configurations for $H^{3a}$. Finally, for the dynamics induced by fourth order harmonics, the speed of patch movement seems to scale with the square of the domain size. In contrast, for different domain sizes, the speed of the local regularly spaced fractals structures resembling the sandpile identity ($t=0.00633$ in Figure~\ref{Fig2}E) seems to scale with the domain size. Finally, hyper-fractal configurations seem to occur at the same absolute times.

Given these observations, we propose a model in which we assign different dimensions to the different types of ``objects'' appearing in the sandpile identity. Specifically, we assign the dimension $d_c=1$ to tropical curves, $d_p=2$ to patches, $d_f=3$ to fractals, and $d_e=4$ to hyper-fractals. Given an harmonic field of order $o$, an object of dimension $d$ is only ``transformed'' by the corresponding sandpile identity dynamics if $o\geq d$. Given that it is transformed, the speed at which this transformation occurs is proportional to $N^{o-d}$, with $N$ the size of the domain. 

\subsection{The topology and renormalization of the abelian sandpile group}
In this section, we discuss the topology and the renormalization of the abelian sandpile group. Intuitively, for a finite simply-connected domain $\Gamma$, a renormalization assigns some universal coordinates to each configuration of the corresponding sandpile group. With these universal coordinates, it becomes possible to identify configurations of sandpile groups corresponding to different domains, a prerequisite to determine their limiting structure for an infinitely large domain. For our analysis, we denote by $\partial\Gamma$ the boundary of the domain $\Gamma$, and by $G$ the sandpile group corresponding to $\Gamma$. Note that $G$ is generated by adding particles only to vertices at the domain boundary \cite{Creutz1990}, and that the existence and uniqueness for Dirichlet problem on $\Gamma$ holds. The latter implies that, for each combination of particle drops at the domain boundary resulting in the projection of a recurrent configuration on itself after toppling, there exists some corresponding harmonic function $H$ defined both in the interior and exterior of the domain $\Gamma$.

Nonetheless $G$ is a discrete group, the continuity of the harmonic dynamics suggests that it approximates some Lie group $\tilde G$. To analyze the topology of $\tilde G$, we first introduce an extended (modified) sandpile model where each vertex at the boundary of the domain is allowed to carry a non-negative real number of particles, whereas each vertex in the interior of the domain can still only carry a non-negative integer number of particles. The toppling rules for the extended sandpile model are the same as for the original one, i.e. a vertex carrying four or more particles becomes unstable and topples  -- decreasing its number of particles by four and increasing the number of particles of each of its neighbors by one -- independent if it is a vertex at the boundary or the interior of the domain. Similarly, we allow to drop a real number of particles on vertices at the boundary, and a non-negative integer number of particles on vertices in the interior of the domain. It then directly follows that also the space of recurrent configurations of the extended sandpile model forms a topological abelian group $\tilde G$ -- referred to as the extended sandpile group -- of which the original sandpile group $G$ is a discrete subgroup. The quotient group $\tilde G\slash G$ is isomorphic to the torus $(\mathbb{R}\slash\mathbb{Z})^{\partial\Gamma},$ where $\mathbb{R}\slash\mathbb{Z}$ denotes the circle of length $1,$ the result of gluing the ends of the interval $[0,1]$. In particular, $\tilde G$ must be made of cubes $[0,1]^{\partial \Gamma}$, one for each element of $G,$ glued along their boundary faces, which corresponds to a union of tori. In fact, it is a single torus, as the following proposition shows.

\begin{proposition}
 The group $\tilde G$ is a torus of dimension $|\partial\Gamma|$ and volume $|G|$.
 \label{prop_sgtorus}
\end{proposition}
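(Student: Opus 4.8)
The plan is to realize $\tilde G$ explicitly as the quotient of $\mathbb{R}^{\partial\Gamma}$ by a full-rank integer sublattice, from which the torus structure, the dimension, and the volume all drop out simultaneously; throughout, write $n=|\partial\Gamma|$. First I would introduce the map $\phi\colon\mathbb{R}^{\partial\Gamma}\to\tilde G$ defined by $\phi(r)=(I+r)^\circ$, i.e.\ adding $r_v$ (real) particles to each boundary vertex $v$ of the sandpile identity $I$ and relaxing. Since the extended toppling operators commute with one another and with particle addition, the same computation that makes the ordinary sandpile group abelian shows $\phi(r+s)=\phi(r)\oplus\phi(s)$ (here $\oplus$ is the group law on $\tilde G$, with neutral element $I$) and $\phi(0)=I$; continuity in $r$ reflects that a small real change of the boundary load perturbs the relaxed configuration only slightly at the boundary. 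Verifying that $\phi$ is a well-defined continuous homomorphism into the compact group $\tilde G$ is the first step.

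Next I would show $\phi$ is surjective, which is the genuine content of the statement. Restricted to $\mathbb{Z}^{\partial\Gamma}$, the map $\phi$ surjects onto $G$, because $G$ is generated by single-particle additions at the boundary. Composing with the projection $\pi\colon\tilde G\to\tilde G/G\cong(\mathbb{R}/\mathbb{Z})^{\partial\Gamma}$, one checks that $\pi\circ\phi$ is the coordinatewise reduction $\mathbb{R}^{\partial\Gamma}\to(\mathbb{R}/\mathbb{Z})^{\partial\Gamma}$, hence surjective, so $\pi$ maps the image $\phi(\mathbb{R}^{\partial\Gamma})$ onto all of $\tilde G/G$. Since this image also contains $\phi(\mathbb{Z}^{\partial\Gamma})=G=\ker\pi$, a subgroup that both contains $\ker\pi$ and surjects under $\pi$ must equal $\tilde G$; thus $\phi(\mathbb{R}^{\partial\Gamma})=\tilde G$. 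In particular $\tilde G$ is connected, being the continuous image of the connected group $\mathbb{R}^{\partial\Gamma}$ --- this is exactly what rules out the a priori possibility that the $|G|$ cubes assemble into several disjoint tori.

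It then remains to identify the kernel and read off the invariants. From $\pi\circ\phi$ being the coordinatewise reduction we get $\phi^{-1}(G)=\ker(\pi\circ\phi)=\mathbb{Z}^{\partial\Gamma}$, so $\ker\phi\subseteq\mathbb{Z}^{\partial\Gamma}$ and therefore $\ker\phi=\ker\!\big(\mathbb{Z}^{\partial\Gamma}\to G\big)=:\Lambda$, a sublattice with $\mathbb{Z}^{\partial\Gamma}/\Lambda\cong G$. As $G$ is finite, $\Lambda$ has full rank $n$ and index $|G|$, whence
\[
\tilde G\;\cong\;\mathbb{R}^{\partial\Gamma}/\Lambda,
\]
a torus of dimension $n=|\partial\Gamma|$ whose volume equals the covolume of $\Lambda$, namely $[\mathbb{Z}^{\partial\Gamma}:\Lambda]=|G|$. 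Equivalently, a fundamental domain for $\Lambda$ is tiled by exactly $|G|$ unit cubes, matching the cube decomposition already described.

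I expect the main obstacle to lie in Steps~1 and 2: rigorously justifying that the relaxation map extends to a genuinely continuous homomorphism on real boundary loads, and that $\pi\circ\phi$ is precisely the natural reduction (which pins down both surjectivity and the kernel, and depends on how the isomorphism $\tilde G/G\cong(\mathbb{R}/\mathbb{Z})^{\partial\Gamma}$ is set up). The dimension and volume are then bookkeeping: $\dim\tilde G=\dim(\tilde G/G)=n$ because $G$ is finite, and the $|G|$-fold covering $\tilde G\to(\mathbb{R}/\mathbb{Z})^{\partial\Gamma}$ scales the unit volume of the base by $|G|$. As a more abstract alternative for the final step, once connectedness is established one could simply invoke the classification of compact connected abelian Lie groups to conclude that $\tilde G$ is a torus; but the explicit lattice description is preferable here, since it delivers the dimension and the volume at the same time.
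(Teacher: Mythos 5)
Your proposal is correct and follows essentially the same route as the paper: both realize $\tilde G$ as a quotient of $\mathbb{R}^{\partial\Gamma}$ via boundary particle additions (using that $G$ is generated by boundary drops), deduce connectedness from the continuous image of a connected space, and read off the dimension and volume from the extension $\tilde G/G\cong(\mathbb{R}/\mathbb{Z})^{\partial\Gamma}$. Your explicit identification $\tilde G\cong\mathbb{R}^{\partial\Gamma}/\Lambda$ with $\Lambda=\ker(\mathbb{Z}^{\partial\Gamma}\to G)$ is a welcome sharpening of the paper's terser argument, but not a different approach.
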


\begin{proof}
Since $(\mathbb{R}\slash\mathbb{Z})^{\partial\Gamma}$ is a torus of volume $1$ and $\tilde G$ is its extension by $G,$ the claims about the dimension and the volume are straightforward provided that $\tilde G$ consists of a single connected component. Recall that $G$ is generated by dropping particles only onto vertices at the boundary $\partial\Gamma$ of the domain $\Gamma$. Therefore, $\tilde G$ is the quotient of $\mathbb{R}^{\partial\Gamma}.$ We complete the proof by recalling that the continuous image of a connected space is connected. 
\end{proof}

The precise relationship between $\tilde G$ and its subgroup $G$ is the following. There is a floor function from $\tilde G$ to $G$. The preimage of a recurrent state $\phi\in G$ under the floor function is a cube of volume $1$ with vertices at $(\phi+\epsilon)^\circ\in G,$ where $\epsilon$ is a function taking $0$ or $1$ and supported on $\partial\Gamma$. 
This relationship is best illustrated on the examples of either a domain consisting of only a single vertex, or a domain consisting of two adjacent vertices. 
In the case when $\Gamma$ consists of just one vertex, all stable configurations are recurrent. Therefore, the sandpile group is $\mathbb{Z}\slash 4\mathbb{Z}$ and the extended sandpile group is $\mathbb{R}\slash 4\mathbb{Z}$, i.e. the circle of length $4$ (see Figure \ref{Fig4}A). 
If $\Gamma$ consists of two adjacent vertices, $15$ out of the $16$ stable configurations on $\Gamma$ are recurrent, with the only non-recurrent configuration being the configuration $c^0_{ij}=0$ where both vertices carry zero particles. The sandpile group $G$ is cyclic with the sandpile identity given by the configuration $c^3_{ij}=3$ where each vertex carries three particles. Since $|\partial\Gamma|=|\Gamma|=2,$ the extended sandpile group $\tilde G$ is a two-dimensional torus of area $15$. It is isomorphic to $\mathbb{R}^\Gamma\slash\Delta\mathbb{Z}^\Gamma$ (Figure \ref{Fig4}B), where $\mathbb{R}^\Gamma\slash\Delta\mathbb{Z}^\Gamma$ denotes the quotient of $\mathbb{R}^\Gamma$ by the image of the Laplacian $\Delta\mathbb{Z}^\Gamma\subset \mathbb{Z}^\Gamma$.

In order to define a proper renormalization, we would like to have universal coordinates on sandpile groups for all $\Gamma$. In the following, we show that those coordinates are provided by the harmonic dynamics described in this paper. Let $\mathcal{H}$ be the space of real-valued discrete harmonic functions on $\mathbb{Z}^2.$ Denote by $\mathcal{H}_\mathbb{Z}\subset \mathcal{H}$ the subspace of integer-valued harmonics. The action $\tilde{C}(t)=(\tilde{C}+tX^H)^\circ$ of a line spanned by $H\in\mathcal{H}$ on a recurrent configuration $\tilde{C}$ of the extended sandpile group $\tilde G$ defines a geodesic on $\tilde G$. If $H\in \mathcal{H}_\mathbb{Z}$, the geodesic is closed. The projection $C(t)=\lfloor\tilde{C}\rfloor$ of this geodesic to $G$ via the floor-function $\lfloor.\rfloor$ gives the cyclic trajectory of the corresponding harmonic sandpile dynamics (compare Eq.~\ref{dynDef}). 

Recall how we constructed the sandpile identity dynamics for a given harmonic field $H\in\mathcal{H}_\mathbb{Z}$ (Figure~\ref{Fig1}C), which can be extended to an homomorphism $\eta:\mathcal{H}\rightarrow\tilde G$ mapping harmonic fields onto configurations of the extended sandpile group as follows: For a given harmonic $H\in\mathcal{H}$ defined on the standard square lattice $\mathbb{Z}^2$, we first define its restriction $\tilde H=H|_\Gamma$ to the domain $\Gamma$. 
We then determine the minimal number $k\in\mathbb{Z}$  such that $\Delta (\tilde H+k)$ becomes non-positive. With these definitions, the homomorphism (corresponding to the extended sandpile identity dynamics) is given by 
\begin{align*}
\eta(H)=(-\Delta(\tilde H+k)+I)^\circ.
\end{align*}
Note that since $H$ is not restricted to take integer values anymore, an additional time parameter as in Eq.~\ref{dynDef} is not required. Similarly, the fact that vertices can take real values at the boundary in the extended sandpile model allows us to omit the floor function.

The following lemma is a straightforward consequence of the existence of a solution to the Dirichlet problem on $\Gamma$ and outside \cite{Lawler2010}: 
\begin{lemma}
The homomorphism $\eta$ is surjective.
\label{lem_etasurj}
\end{lemma}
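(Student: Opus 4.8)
The plan is to exploit that $\eta$ is a group homomorphism and to reduce its surjectivity to the surjectivity of a single linear map. Writing $\tilde H=H|_\Gamma$, recall from the proof of Proposition~\ref{prop_sgtorus} that $\tilde G$ is the quotient $q:\mathbb R^{\partial\Gamma}\to\tilde G$, $q(B)=(B+I)^\circ$, of the space of real boundary drops by a full-rank lattice $\ker q$ (the ``null drops''). First I would check that $\eta$ factors as $\eta=q\circ\pi$, where $\pi:\mathcal H\to\mathbb R^{\partial\Gamma}$ is the linear map $\pi(H)=\bigl(-\Delta\tilde H\bigr)\big|_{\partial\Gamma}$: indeed $-\Delta(\tilde H+k)=-\Delta\tilde H+k\,(-\Delta\mathbf 1_\Gamma)$, and by Lemma~\ref{lemma:periodicity} the Creutz drop $-\Delta\mathbf 1_\Gamma$ lies in $\ker q$, so the integer shift $k$ and the flooring are immaterial in $\tilde G$. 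Since $\mathrm{im}\,\pi$ is a subspace and $\ker q$ a full-rank lattice, $q(\mathrm{im}\,\pi)=\tilde G$ if and only if $\mathrm{im}\,\pi+\ker q=\mathbb R^{\partial\Gamma}$, which for a subspace forces $\mathrm{im}\,\pi=\mathbb R^{\partial\Gamma}$ (the image of a lattice in $\mathbb R^{\partial\Gamma}/\mathrm{im}\,\pi$ is finitely generated, hence proper unless the quotient is trivial). Thus the lemma is equivalent to the statement that $\pi$ is surjective.

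Next I would rewrite $\pi$ in a form that isolates the role of the exterior. Because $H$ is harmonic on all of $\mathbb Z^2$, at a boundary vertex $v$ one has $(-\Delta\tilde H)_v=\sum_{w\sim v,\,w\notin\Gamma}H_w$, so $\pi(H)_v$ is the sum of the values of $H$ on the exterior neighbours of $v$. Surjectivity of $\pi$ is then equivalent, by duality, to injectivity of its transpose: if $\lambda\in\mathbb R^{\partial\Gamma}$ satisfies $\langle\lambda,\pi(H)\rangle=\sum_{w\in R}\mu_w H_w=0$ for every harmonic $H$, where $R$ is the exterior ring and $\mu_w=\sum_{v\sim w,\,v\in\partial\Gamma}\lambda_v$, then $\lambda=0$. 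This splits into two inputs: (a) the restriction map $\rho:\mathcal H\to\mathbb R^{R}$, $H\mapsto H|_R$, is surjective, which forces $\mu\equiv0$; and (b) the elementary incidence fact that every boundary vertex possesses a \emph{private} exterior neighbour (true for rectangular and, more generally, well-behaved simply-connected $\Gamma$), which upgrades $\mu\equiv0$ to $\lambda=0$. Step (b) is routine, so the whole lemma rests on step (a).

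For step (a) I would invoke the solvability of the discrete Dirichlet problem both on $\Gamma$ and on its exterior \cite{Lawler2010}. Concretely, to realize a prescribed ring datum $\phi\in\mathbb R^R$ I would assign these values on the exterior ring, solve the exterior Dirichlet problem on $\mathbb Z^2\setminus\Gamma^\circ$ and the interior Dirichlet problem on $\Gamma$ with matching values on $\partial\Gamma$, and assemble the two into a single function harmonic at every vertex of $\mathbb Z^2$; the discrete maximum principle supplies the uniqueness and the non-degeneracy (full rank) needed to conclude that $\rho$ hits all of $\mathbb R^{R}$. A convenient sanity check, and possibly a self-contained alternative to the Dirichlet machinery, is that $\mathcal H$ is infinite-dimensional and already contains the harmonic polynomials of Table~\ref{table1} together with the discrete-exponential harmonics, whose restrictions to the finite set $R$ can be shown to span $\mathbb R^R$.

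The main obstacle is exactly step (a), and within it the exterior problem. In two dimensions simple random walk is recurrent, so the naive harmonic-measure solution of the exterior Dirichlet problem degenerates and one must instead prescribe the behaviour at infinity, working with the potential kernel rather than with bounded harmonic functions; verifying that such an exterior solution exists, that it glues with the interior solution into one globally harmonic $H$ across $\partial\Gamma$, and that the assembled construction loses no rank (so that $\rho$, and hence $\pi$, is genuinely onto) is the technical heart of the argument. I would also take care that the corner bookkeeping of the ``folding'' construction of Figure~\ref{Fig1}C does not secretly drop a dimension.
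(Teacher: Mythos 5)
The paper gives no proof of this lemma at all: it is dispatched in the single sentence preceding the statement, as ``a straightforward consequence of the existence of a solution to the Dirichlet problem on $\Gamma$ and outside'' \cite{Lawler2010}. Your proposal rests on exactly that same analytic input, so at its core it is the same approach; what you add is the entire reduction that the paper calls straightforward, and that reduction is correct. The factorization $\eta=q\circ\pi$ checks out: since $H$ is harmonic at each $v\in\partial\Gamma$, indeed $(-\Delta\tilde H)_v=\sum_{w\sim v,\,w\notin\Gamma}H_w$ (consistent with the corner ``folding'' of Figure~\ref{Fig1}C, which adds the two exterior contributions), the Creutz drop $-\Delta\mathbf 1_\Gamma$ lies in $\ker q$ by the $\hat H=\mathbf 1_\Gamma$ case of Lemma~\ref{lemma:periodicity}, and since $\ker q$ is a cocompact lattice by Proposition~\ref{prop_sgtorus}, a proper subspace plus a lattice cannot exhaust $\mathbb R^{\partial\Gamma}$, so surjectivity of $\eta$ is equivalent to surjectivity of the linear map $\pi$. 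The dualization to the exterior ring and the private-exterior-neighbour step (b) are also fine for rectangular $\Gamma$, where every frame vertex has an exterior neighbour whose unique neighbour inside $\Gamma$ is that vertex.

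The one point that is not yet a proof is step (a), and you have located it honestly. Be aware, though, that the construction as literally sketched --- prescribe $\phi$ on the ring $R$, solve the interior and exterior Dirichlet problems separately, and ``assemble'' --- does not by itself yield an element of $\mathcal H$: harmonicity \emph{at} the vertices of $R$ and of $\partial\Gamma$ involves neighbours on both sides of the interface, so it is an extra linear constraint coupling the two solutions rather than an automatic consequence of gluing, and one must check that imposing it does not cut down the attainable ring data. The clean statement you need is that the restriction map $\mathcal H\to\mathbb R^{S}$ is onto for every finite $S\subset\mathbb Z^2$ (equivalently, that no nonzero measure supported on $S$ annihilates all global discrete harmonic functions); proving this does require the potential kernel or an interpolation argument with discrete harmonic polynomials or exponentials, precisely because bounded harmonic functions on $\mathbb Z^2$ are constant. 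This is exactly the content that both you and the authors delegate to the citation, so your proposal is no less complete than the paper's own treatment --- it is substantially more explicit --- but the lemma should not be advertised as fully proved until that restriction-surjectivity statement is nailed down.
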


Now consider an exhausting injective family of domains $\{\Gamma_N\subset\mathbb{Z}^2\},$ i.e. $\Gamma_N\subset\Gamma_{N+1}$, where the union of all $\Gamma_N$ is $\bigcup_N\Gamma_N=\mathbb{Z}^2.$ Let $\tilde G_N$ be the extended sandpile group of $\Gamma_N$.
\begin{theorem}
There are canonical surjective homomorphisms $\tilde G_{N+1}\rightarrow \tilde G_{N}$ with $$\tilde G_\infty:=\projlim_{N\rightarrow\infty}\tilde G_N$$ independent from the specific choice of the family $\{\Gamma_N\}.$ There is a natural inclusion $$\mathcal{H}\slash\mathcal{H}_\mathbb{Z}\rightarrow\tilde G_\infty.$$
\end{theorem}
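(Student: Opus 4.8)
The plan is to realize every group $\tilde G_N$ as a quotient of the single group $\mathcal{H}$ through the surjections $\eta_N:\mathcal{H}\to\tilde G_N$ of Lemma~\ref{lem_etasurj}, and to extract both the transition maps and the final inclusion purely from the behaviour of the kernels $\ker\eta_N$. Once the kernels are identified, every remaining assertion is formal, so the entire argument reduces to one explicit computation.

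First I would compute $\ker\eta_N$. Write $\tilde H=H|_{\Gamma_N}$ and let $\Delta_N:=-\Delta|_{\Gamma_N}$ be the reduced (sink) Laplacian of $\Gamma_N$, so that $X^H=-\Delta\tilde H=\Delta_N\tilde H$; since $H$ is harmonic this vector is supported on $\partial\Gamma_N$. By definition $\eta_N(H)=(I+X^H)^\circ$, and in $\tilde G_N$ this equals the identity exactly when $X^H$ lies in the toppling lattice $\Delta_N\mathbb{Z}^{\Gamma_N}$ (the image over $\mathbb{Z}$ of $\Delta_N$), because toppling always fires an integer number of times even when boundary values are real. Thus $\eta_N(H)=I$ means $\Delta_N\tilde H=\Delta_N m$ for some $m\in\mathbb{Z}^{\Gamma_N}$, and invertibility of the reduced Laplacian of a connected domain forces $\tilde H=m$, giving
\begin{equation*}
\ker\eta_N=\{\,H\in\mathcal{H}:H(v)\in\mathbb{Z}\text{ for all }v\in\Gamma_N\,\}.
\end{equation*}
The inclusion $\Gamma_N\subset\Gamma_{N+1}$ then yields $\ker\eta_{N+1}\subseteq\ker\eta_N$ at once, so $\eta_{N+1}(H)\mapsto\eta_N(H)$ is a well-defined and manifestly surjective homomorphism $\pi_N:\tilde G_{N+1}\to\tilde G_N$ with $\pi_N\circ\eta_{N+1}=\eta_N$. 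The same recipe defines a canonical map $\pi_{\Gamma',\Gamma}$ for every pair $\Gamma\subseteq\Gamma'$, and these compose correctly since both sides send $\eta_{\Gamma'}(H)$ to $\eta_\Gamma(H)$; hence $\{\tilde G_\Gamma\}$ is an inverse system over the directed set of all finite domains, and the maps are canonical because $\eta$ is (it is built from harmonic restriction and the Dirichlet problem).

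Next I would pass to the limit. Compatibility $\pi_N\circ\eta_{N+1}=\eta_N$ and the universal property of the projective limit produce a unique homomorphism $\eta_\infty:\mathcal{H}\to\tilde G_\infty$ with $\ker\eta_\infty=\bigcap_N\ker\eta_N$. By the kernel formula this intersection consists of the harmonics integer-valued on every $\Gamma_N$, hence, since $\bigcup_N\Gamma_N=\mathbb{Z}^2$, on all of $\mathbb{Z}^2$; that is, $\bigcap_N\ker\eta_N=\mathcal{H}_\mathbb{Z}$. Therefore $\eta_\infty$ descends to the asserted injection $\mathcal{H}/\mathcal{H}_\mathbb{Z}\hookrightarrow\tilde G_\infty$. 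Independence of the family is then cofinality: any exhausting increasing family is a cofinal sequence in the directed set of all finite domains, and the inverse limit of an inverse system restricted to a cofinal subset is canonically the same, so any two such families yield canonically isomorphic $\tilde G_\infty$.

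The only genuine work is the kernel identity $\ker\eta_N=\{H:H|_{\Gamma_N}\in\mathbb{Z}^{\Gamma_N}\}$; everything downstream — monotonicity of kernels, existence of $\pi_N$ and $\eta_\infty$, the evaluation $\bigcap_N\ker\eta_N=\mathcal{H}_\mathbb{Z}$, and family-independence — follows automatically. I expect the main obstacle to be the one substantive point inside that identity: verifying that the extended, real-boundary relaxation still obeys the toppling-lattice description $(I+X)^\circ=I\iff X\in\Delta_N\mathbb{Z}^{\Gamma_N}$ exactly as in the integer model, so that the computation is legitimate in $\tilde G_N$ and not merely in $G_N$, after which invertibility of $\Delta_N$ closes the argument.
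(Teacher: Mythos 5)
Your proposal is correct and follows essentially the same route as the paper's proof: identify $\ker\eta_N$ as the harmonics that are integer-valued on $\Gamma_N$, use the resulting nesting $K_{N+1}\subset K_N$ to obtain the surjections between the quotients $\mathcal{H}/K_N\cong\tilde G_N$, and observe that $\bigcap_N K_N=\mathcal{H}_\mathbb{Z}$ to get the inclusion. You supply more detail than the paper does on the two points it leaves implicit (the toppling-lattice justification of the kernel identity and the cofinality argument for independence of the exhausting family), which is a welcome strengthening rather than a deviation.
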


\begin{proof}
Let $K_N$ be the kernel of $\eta_N:\mathcal{H}\rightarrow\tilde G_N$ This kernel $K_N$ is equal to the space of harmonic functions taking integer values on $\Gamma_N.$ Therefore, $K_{N+1}\subset K_{N}$ which give projections between quotients $$\mathcal{H}\slash K_{N+1}\rightarrow \mathcal{H}\slash K_{N}.$$ By Lemma \ref{lem_etasurj} these quotients  are canonically isomorphic to $\tilde G_{N+1}$ and $\tilde G_N$. Finally, we note that the intersection  of all $K_N$ is $\mathcal{H}_\mathbb{Z}.$ 
\end{proof}

This theorem directly poses the following question, which we hope to answer in a future article:
\begin{question}
Is the inclusion $\mathcal{H}\slash\mathcal{H}_\mathbb{Z}\rightarrow\tilde G_\infty$ an isomorphism?
\label{quest_spglim}
\end{question}

If the answer to this question is affirmative, we could say that the limit of the extended sandpile group on the whole lattice is described by the quotient of real-valued discrete harmonic functions by the subspace of integer-valued ones. Otherwise, to compute the limit of the sandpile group, we would need to describe the quotient $\tilde G_\infty \slash (\mathcal{H}\slash\mathcal{H}_\mathbb{Z}).$ In any case, we hope that thinking of a sandpile group $G$ as a discrete object approximating $\tilde G_\infty$ would induce a better understanding of both, leading to the rigorous description of the sandpile identity and its harmonic dynamics.

Since $\mathcal{H}\slash\mathcal{H}_\mathbb{Z}$ projects onto all sandpile groups $\tilde G$ corresponding to finite domains $\Gamma$, the theorem thus provides a set of universal coordinates for the extended sandpile groups corresponding to all domains $\Gamma$. Since there exists a natural inclusion of $G$ into $\tilde{G}$, these coordinates restricted to $G$ are also universal coordinates for the original sandpile groups corresponding to all domains $\Gamma$. The existence of these universal coordinates directly implies that both the original  and the extended sandpile groups admit a natural renormalization. In more detail, for a pair of domains $\Gamma_1\subset\Gamma_2$, there is a canonical renormalization projection from $\tilde G_2$ to $\tilde G_1.$ Combining this projection with the inclusion $G_2\rightarrow \tilde G_2$ and the floor function $\tilde G_1\rightarrow G_1$ we get the natural renormalization map from $G_2$ to $G_1.$  It amounts to define the projective limit $G_\infty$ which, in contrast to $\tilde G_\infty,$ seems to depend on a particular family of exhausting domains $\{\Gamma_N\}.$ One more circumstance that makes $G_\infty$ even more subtle is that the renormalization map on the original sandpile groups is not a homomorphism and thus the limit should not be a group. Investigation of the renormalization properties of the sandpile group is subject of future research.

\subsection{Tropical Geometry and Super-Harmonic Functions}\label{sec:topology}
While tropical geometry is still a relatively young mathematical field, it was already applied in the study of the BTW sandpile model leading to the identification and mathematical description of the thin ``curves'' or ``strings'' visible in the sandpile identity (Figure~\ref{Fig1}B) as being tropical curves \cite{Caracciolo2010,Kalinin2015,Kalinin2016,Kalinin2017}. In tropical algebra (see e.g. \cite{Brugalle2015,Kalinin2018} for an introduction), the addition operator $\oplus$ corresponds to the classical minimum, while the multiplication operator $\otimes$ corresponds to the classical sum:
\begin{align*}
x\oplus y&=\min(x,y)\\
x\otimes y&=x+y.
\end{align*}

Given that the sandpile identity dynamics induced by harmonic fields seem to transform patches, fractals and hyper-fractals by the repeated action of tropical curves, we asked if tropical geometry can also be directly applied to our study of harmonic fields.
When applied to harmonic fields, tropical multiplication states that the sum of two harmonic functions must be harmonic, too, which is trivially satisfied by the linearity of the discrete Laplacian operator. The tropical sum of two harmonics, however, is in general not a harmonic, but a super-harmonic function, i.e. a function $H^S$ for which $(\Delta H^S)_{ij}\leq 0$. We thus decided to also analyze the evolution of the sandpile identity under super-harmonic fields. Note that all algorithms introduced above for the study of harmonic fields can be directly applied to super-harmonic fields, with the only difference that the potential $X=-\Delta H^S$ can become non-zero also in the interior of the domain.

In Figure~\ref{Fig4}C, we depict the sandpile identity dynamics induced by a super-harmonic field obtained by taking the tropical sum of the three harmonic functions $ij+c_1$, $-3ij+c_2$, $ki+c_3$ and $-3ki+c_4$, with $k=128$ for a $255\times 255$ square domain. The constants $c_1,\ldots c_4$ were chosen such that the square domain is split into three rectangular regions, each governed by one of the harmonic functions. For illustrative purposes, we furthermore chose $c_1,\ldots c_4$ such that each of these rectangular regions has the same width (half the width of the domain), but that the heights of vertically adjacent regions have a ratio $3:1$ (three quarter, respectively one quarter, of the domain height). 
The sandpile identity dynamics of each region widely resemble the dynamics induced by its respective harmonic field alone. Specifically, the patches in the right half of the domain, which is governed by first order harmonics only transforming tropical curves, remain nearly fixed in space. In contrast, the dynamics of the top-left and bottom-left regions of the domain resemble the stretching actions of $H^{2a}$. Due to the opposite signs and different magnitudes of their governing harmonics, the stretching actions in these two regions however have opposite directions and different speeds.

The partitioning of the domain into distinct regions each showing sandpile identity dynamics corresponding to the ones of its governing harmonic field is, however, not perfect: We can clearly observe tropical curves which extend over more than one of these regions. These tropical curves result in a cross-talk between the regions disturbing the structure of their respective patches. That such tropical curves spanning multiple regions occur is not surprising: there exist several well known limitations for allowed local patterns in recurrent configurations. For example, it is not possible that two neighboring vertices of a recurrent configuration both carry zero particles \cite[p.17]{Creutz1990,Paoletti2014}. Intuitively, a strict division of the sandpile dynamics into regions of the domain governed by different harmonic functions would, at one time or the other, violate one of these conditions; thus, a ``perfect partitioning'' of the domain into such regions can in general not exist -- at least not for finite domains. 
Nevertheless, already the observation that super-harmonic functions lead to the partitioning -- even if not perfect -- of the domain into distinct regions each displaying dynamics typical for the harmonic field governing that region is rather surprising, and suggests that tropical geometry might apply to more than the study of tropical curves in the BTW sandpile model.

\subsection{Stochastic realizations of harmonic potentials}
\begin{figure}[htb]
	\centering
	\includegraphics[width=0.68\linewidth]{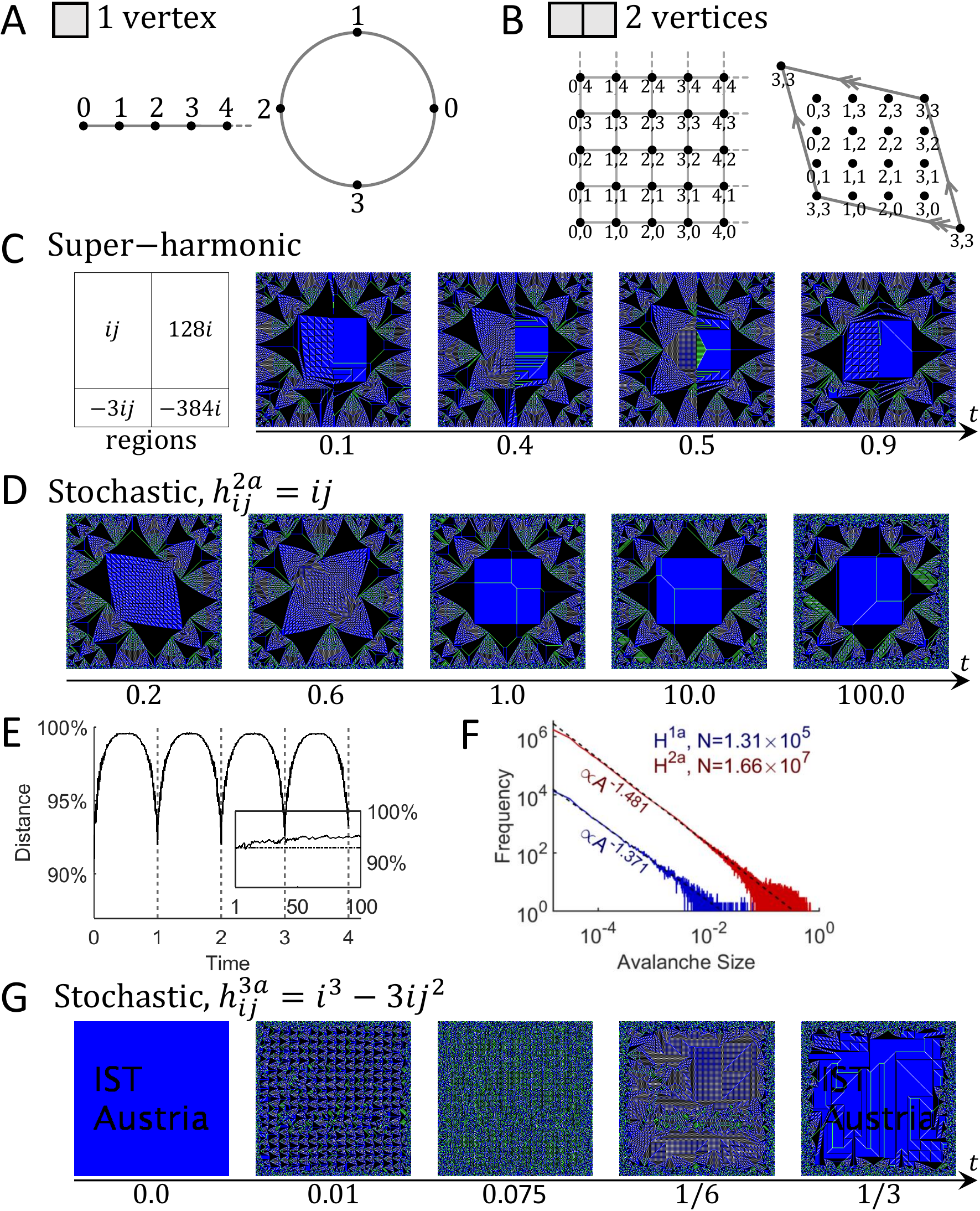}
	\caption{
	A) Topology of the sandpile and the extended sandpile group on a domain consisting of a single vertex. Left: Space of legal (non-negative) states. Right: The corresponding sandpile groups $G=\mathbb{Z}\slash 4\mathbb{Z}\subset \tilde G=\mathbb{R}\slash 4\mathbb{Z}$.
	B) Space of legal states (left) and the sandpile group (right) on a domain consisting of two adjacent vertices. The extended sandpile group is a two-dimensional torus obtained by gluing the opposite sides of the rhombus, the lattice points on $\tilde G$ form $G$. The corner points all  represent the same point which is the sandpile identity. Moving along horizontal/vertical directions corresponds to increasing the amount of sand carried by one of the two vertices.
	C) Sandpile identity dynamics induced by a super-harmonic function, dividing the domain into four connected rectangular regions governed by the harmonic fields $ij$ (top-left), $-3ij$ (bottom-left), $ki$ (top-right), and $-3ki$ (bottom-right), with $k=128$ for a $255\times 255$ domain.
	D) Stochastic realization of the sandpile identity dynamics induced by the field $H^{2a}$. 
	E) Evolution of the distance (normalized variation of information) between the stochastic identity dynamics shown in (D) and the sandpile identity. The main axes show the evolution over four periods, the inset the evolution at $t=1,2,\ldots$ for 100 periods.
	F) Avalanche size distribution over one full period of the stochastic identity dynamics induced by $H^{1a}$ (blue) and $H^{2a}$ (red). The dashed black lines show the power law distributions with critical coefficients $-1.371$ for $H^{1a}$ and $-1.481$ for $H^{2a}$, respectively.
	G) Sandpile dynamics induced by higher order harmonic fields (here: $H^{3a}$) allow to encode and decode information into and from seemingly random configurations ($t=0.075$).
	}
	\label{Fig4}
\end{figure}

Recall that, to create a harmonic field inducing the sandpile identity dynamics, we drop particles at regular time intervals onto boundary vertices according to the values of the potential $X$ at the respective vertices (Eq.~\ref{dynDef}). Instead of dropping these particles deterministically, we can alternatively drop them stochastically according to a probability distribution proportional to the potential $X$. This algorithm corresponds to a Markov process resulting in stochastic sandpile identity dynamics with the same expected time interval between two successive particle drops onto the same boundary vertex as the (fixed) time interval of its deterministic counterpart. Due to the simplicity of this Markov process -- the probability to drop a particle onto a given boundary vertex is independent of the current configuration -- it would be relatively simple to implement this algorithm experimentally in a physical system resembling the abelian sandpile. 

To analyze the sandpile dynamics under such a stochastic Markov chain, we dropped particles one-by-one onto boundary vertices, with the probability that a given particle is dropped onto the boundary vertex $(i,j)$ being proportional to $x_{ij}$. After each particle drop, we relaxed the sandpile and associated the time $t=\frac{k}{|X|}$ to the resulting configuration, with $k=0,1,2,\ldots$ the number of particles already dropped.
In Figure~\ref{Fig4}D, we depict the stochastic sandpile identity dynamics corresponding to a realization of this Markov process for the harmonic field $H^{2a}$. The stochastic dynamics closely resemble their deterministic counterparts, but are overlaid by ``noise'' arising due to the stochastic nature of the process. This noise seems to be stronger towards the boundaries of the domain where particles are dropped, while in the interior of the domain the stochastic sandpile dynamics remain very close to their deterministic counterparts besides small changes in the positions of the tropical curves. 

The stochastic sandpile identity dynamics are surprisingly robust: after one period, nearly all details of the sandpile identity are reproduced ($t=1$ in Figure~\ref{Fig4}D). Only after significantly longer time, the structure of the sandpile identity becomes significantly disturbed ($t=10$ and $t=100$ in Figure~\ref{Fig4}D). Specifically, the noisy region towards the boundaries of the domain seems to slowly increases in size over time.
To mathematically analyze this drift, we determined the normalized variation of information between the sandpile identity and the configurations emerging in the stochastic dynamics at different times. The normalized variation of information $\mathcal{VI}(I;I(t))=1-\frac{\mathcal{I}(I,I(t))}{\mathcal{H}(I,I(t))}$ is a distance measure quantifying the difference between two configurations based on statistics on the relationship between the numbers of particles carried by the same vertex in the two configurations, with $\mathcal{I}$ the mutual information and $\mathcal{H}$ the joint entropy. Thus, the normalized variation of information is sensitive with respect to small shifts in the position of periodic patterns constituting e.g. the Sierpinski triangles, which can result in rather high distances assigned to pairs of visually very similar configurations. However, we are not aware of any other reliable distance measure which is more robust with respect to such local shifts.

Due to the reasons described above, the variation of information quickly increased for the stochastic sandpile identity dynamics induced by $H^{2a}$ (Figure~\ref{Fig4}E), and its dynamic range was restricted to the interval between $0.9$ and $1.0$ for all but the smallest times (Figure~\ref{Fig4}E). Nevertheless, it showed clear $1$-periodic dynamics, with local minimums located at integer valued times as expected from the construction of the stochastic process. When we determined the evolution of the variation of information at only these integer valued times for up to 100 periods, we observed a slow drift of the stochastic realization away from its deterministic counterpart (Figure~\ref{Fig4}E, inset). The existence of this slow drift is not surprising due to the following corollary:
\begin{corollary}
Assume that, for a rectangular $N\times M$ domain, all vertices of the potential $X^H$ carry a positive particle number at one of the four domain boundaries, i.e. $x_{1,j}>0$, $x_{N,j}>0$, $x_{i,1}>0$ or $x_{i,M}>0$ for all $j=1,\ldots,M$, respectively $i=1,\ldots,N$. Then, the stochastic sandpile identity dynamics induced by its corresponding harmonic field $H$ are ergodic.
\end{corollary}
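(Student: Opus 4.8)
The plan is to recognize the stochastic identity dynamics as a random walk on the finite abelian group $G$ and to reduce ergodicity to a single algebraic statement: that those boundary operators which can actually receive particles generate $G$. First I would set up the walk. Since $I$ is recurrent, every configuration reached by dropping one particle on a boundary vertex and relaxing is again recurrent, so the process lives on the set of recurrent configurations, which we identify with $G$. Writing $a_{ij}$ for the operator ``add one particle at $(i,j)$ and relax'', a single step replaces the current state $C$ by $a_{ij}C$ with probability $p_{ij}=x_{ij}/|X|$, independently of $C$; the support of this step distribution is $S=\{a_{ij}:x_{ij}>0\}\subseteq\partial\Gamma$. Each $a_{ij}$ is a bijection of the recurrent class, and the $a_{ij}$ commute, so the chain is a genuine random walk on $G$ driven by $S$. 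For such a walk the uniform distribution on $G$ is automatically stationary by translation invariance, and a finite irreducible chain has a unique stationary distribution with respect to which time averages converge; this is exactly the ergodicity asserted. Moreover, since $G$ is finite, the subsemigroup generated by $S$ (the states reachable using forward steps only) coincides with the subgroup $\langle S\rangle$, because $a^{\mathrm{ord}(a)-1}=a^{-1}$ for every $a\in S$. Hence the chain is irreducible on $G$ if and only if $S$ generates $G$, and this is all that remains to prove.

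Next I would show $\langle S\rangle=G$. By hypothesis at least one of the four sides is fully active; after relabelling the axes assume it is the left side, so $a_{1,j}\in S$ for $j=1,\dots,M$. The Creutz relation $a_{ij}^{4}=a_{i-1,j}\,a_{i+1,j}\,a_{i,j-1}\,a_{i,j+1}$ (with $a_{ij}=E$ for $(i,j)\notin\Gamma$) can be solved for the neighbour one column to the right,
\[
a_{i+1,j}=a_{ij}^{4}\,a_{i-1,j}^{-1}\,a_{i,j-1}^{-1}\,a_{i,j+1}^{-1}.
\]
I would then induct on the column index $i$: assuming every operator in columns $\le i$ lies in $\langle a_{1,\cdot}\rangle$, the displayed identity expresses each $a_{i+1,j}$ as a word in operators from columns $i-1$ and $i$, hence $a_{i+1,j}\in\langle a_{1,\cdot}\rangle$ as well (corner and edge cases only drop factors equal to $E$). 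Starting from column $1$, which is active by assumption, this propagates across the whole domain, so $\langle a_{1,\cdot}\rangle$ contains every $a_{ij}$. Since the boundary operators already generate $G$, we obtain $\langle S\rangle\supseteq\langle a_{1,\cdot}\rangle=G$.

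Combining the two steps, $S$ generates $G$, the walk is irreducible on the finite group $G$, and therefore ergodic with uniform stationary distribution. The main obstacle is the algebraic propagation step: it is not a priori obvious that the operators on a single side of the boundary generate the entire sandpile group, and the argument hinges on marching the induction column by column across $\Gamma$ using the toppling relations. A secondary point worth stating cleanly is that forward-only moves suffice, which is precisely where finiteness of $G$ (subsemigroup $=$ subgroup) enters; in particular aperiodicity is not needed for ergodicity in this sense, only irreducibility.
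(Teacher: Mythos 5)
Your proposal is correct and follows essentially the same route as the paper: both use the finiteness of the recurrent class to write $a_{ij}^{-1}=a_{ij}^{n_{ij}-1}$ as a forward move, then solve the Creutz relation $a_{ij}^{4}=a_{i-1,j}a_{i+1,j}a_{i,j-1}a_{i,j+1}$ for the operator one column deeper and induct across the domain to show that the operators on a single active boundary side generate the whole group. Your explicit framing as an irreducible random walk on the finite abelian group $G$ with uniform stationary distribution merely makes precise the ergodicity claim the paper leaves implicit.
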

This corollary follows directly from the proof in \cite{Creutz1990} that any recurrent configuration can be reached from any other by only dropping particles onto vertices of one of the boundary sides of the domain. For completeness, we quickly restate this proof in the notation used in this article:
\begin{proof}
Let $a_{ij}$ be the operator corresponding to adding one particle to vertex $(i,j)$ and relaxing. If we restrict ourselves to recurrent configurations, then the inverse $a^{-1}_{ij}$ of $a_{ij}$ exist. Furthermore, since the set of recurrent configurations is finite, there exists a number $n_{ij}>0$ such that $a_{ij}^{n_{ij}}=E$, with $E$ the identity operator \cite{Creutz1990}. This implies that adding $n_{ij}-1$ particles to the vertex $(i,j)$ and relaxing results in the same configuration as applying the inverse of $a_{ij}$, i.e. $a^{-1}_{ij}=a_{ij}^{n_{ij}-1}$.
Recall that adding four particles to vertex $(i-1,j)$ and relaxing results in the same configuration as adding one particle to each of its neighbors (see proof of lemma~\ref{lemma:periodicity}): $a_{i-1,j}^4=a_{i,j}a_{i-1,j+1}a_{i-1,j-1}a_{i-2,j}$.
Solving for $a_{i,j}$, we obtain $a_{i,j}=a_{i-1,j}^4a_{i-1,j+1}^{-1}a_{i-1,j-1}^{-1}a_{i-2,j}^{-1}=a_{i-1,j}a_{i-1,j+1}^{n_{i-1,j+1}-1}a_{i-1,j-1}^{n_{i-1,j-1}-1}a_{i-2,j}^{n_{i-2,j}-1}$. Thus, dropping a particle onto a given vertex and relaxing is equivalent to dropping (a specific amount of) particles only onto vertices strictly to the left of the given vertex. By induction, this implies that one can achieve the same effect as dropping a particle on any given vertex by only dropping particles onto the left boundary of the domain. The same argument holds for the other three boundaries of the domain.
\end{proof}

Intuitively, ergodicity means that, for a finite domain, the stochastic sandpile identity dynamics will visit every recurrent configuration of the sandpile group infinitely often. Thus, sooner or later, such stochastic dynamics have to leave the vicinity of the deterministic sandpile identity dynamics, and travel through qualitatively different sets of configurations. While this explains the drift of the stochastic sandpile identity dynamics (Figure~\ref{Fig4}E, inset), it does not explain why this drift seems to be rather slow. We note that this robustness of the sandpile identity dynamics furthermore seems to increase with increasing domain size.

Different to their deterministic counterparts where, at some times, more than one particle is dropped onto the domain at once, each particle is dropped separately to induce the stochastic sandpile identity dynamics. This allows us to determine the critical exponent of the avalanche sizes -- the distribution of the number of topplings after dropping one particle -- for the stochastic sandpile identity dynamics induced by each harmonic (Figure~\ref{Fig4}F). Our results show that the critical exponent is different for different harmonic fields, indicating that there might exist different subsets of the sandpile group each showing a different critical exponent.  

\subsection{Encoding information in harmonic sandpile dynamics}
When examining the proof for Lemma~\ref{lemma:periodicity} for the $1$-periodicity of the sandpile identity dynamics induced by harmonic fields, one can observe that the lemma can be directly extended to sandpile dynamics induced by harmonic fields when not starting with the sandpile identity, but with any other recurrent configuration. For example, in Supplementary Figure~\ref{FigS7}, we show the periodic dynamics induced by $H^{2a}$ and $H^{2b}$ when starting with the ``maximally filled'' configuration $c^3_{ij}=3$ where each vertex initially carries three particles, or when starting with the configuration $c^2_{ij}=2$ where each vertex initially carries two particles. 
In some of these dynamics (Supplementary Figure~\ref{FigS7}B), we see the emergence of structures similar to the Sierpinski triangles of the sandpile identity, whereas other dynamics tend to show other kinds of patterns (Supplementary Figure~\ref{FigS7}A,C\&D).
In the following, we asked if this effect can be utilized to encode information in seemingly random sandpile configurations, and to reliably decode them again.

We assume that the information which should be encoded is given in the form of a binary image $\hat{P}\in\{0,1\}^{N\times M}$. To encode this information, we utilize that, on every rectangular domain, the configuration $c^2_{ij}=2$ where each vertex carries two particles is recurrent. This also implies that the stable configuration corresponding to the sum $P=C^2+\hat{P}$ of $C^2$ and $\hat{P}$ is recurrent, too. The dynamics induced by any harmonic field when starting with the configuration $P$ thus have to be $1$-periodic.
In Figure~\ref{Fig4}G, we show the stochastic harmonic sandpile dynamics induced by $H^{3a}$ starting from a configuration encoding the string ``IST Austria'' as described above. Similar to the sandpile identity dynamics induced by $H^{3a}$, these dynamics pass through extended regions of seemingly random configurations ($t=0.075$ in Figure~\ref{Fig4}G) where neither the encoded information nor the fact that information is encoded at all is apparent. Given such a seemingly random configuration, one can however reconstruct the original information by simply determining the sandpile dynamics induced by the same harmonic field starting from this seemingly random configuration ($t=1/3$ in Figure~\ref{Fig4}G). Notably, in this decoding process it is not necessary to know at which exact time the seemingly random configuration was extracted. Instead, it is sufficient to induce dynamics using the same harmonic field and to wait until the encoded information eventually becomes clearly visible.

\section{Discussion}
While we proved that the sandpile identity dynamics have to be $1$-periodic (Lemma~\ref{lemma:periodicity}), we cannot yet provide a mathematical derivation for our observation that the sandpile identity dynamics correspond to smooth transformations of tropical curves, patches, fractals and hyper-fractals, or that most patches seem to be conserved between adjacent frames or even throughout the whole dynamics. However, there exists some ``hints'' in the literature: it was shown that, while $1$-dimensional defects of the background are tropical curves, i.e.  governed by piecewise-linear functions, patches correspond to polynomials of order two \cite{Levine2016,Pegden2017}. These results are in agreement with our observation that first order harmonic fields seem to only transform tropical curves, while second order harmonic fields also transform patches. Moreover, in the course of the dynamics the patches are assembled and translated by tropical curves. If it is possible to ``formalize these hints'' into a consistent mathematical theory, which would also explain the different scaling laws for the speed with which tropical curves, patches, fractals and hyper-fractals move during the sandpile dynamics, remains a task for future research.
Another open question concerns the apparent high robustness of the sandpile identity dynamics. This robustness did not only manifest itself when we compared the deterministic sandpile identity dynamics with their stochastic counterparts realized by ergodic Markov processes (Figure~\ref{Fig4}D-G), but also when we tested various modifications of our algorithm to determine the sandpile identity dynamics. For example, it seems to make little difference if we replace the floor function in Eq.~\ref{dynDef} by the ceil or the round function (Supplementary Figure~\ref{FigS8}A), or if we distinguish from which cardinal direction particles are dropped onto the sandpile  (Supplementary Figure~\ref{FigS8}B). While the robustness of the sandpile identity dynamics with respect to such modifications is certainly interesting, we note that the floor function is the obvious choice since it leads to a direct relationship between the sandpile identity dynamics of the original and the extended sandpile model (Section~\ref{sec:topology}). We also note that, different to the original model, the sandpile identity dynamics of the extended sandpile model are linear in the harmonics. This suggests that one might always calculate the dynamics on the extended model, and only apply the floor function at the time of visualization.

We expect that our results showing the existence of smooth sandpile identity dynamics induced by harmonic fields provides important impulses for future studies of the abelian sandpile in specific, and of (self-organized) critical systems in general. The emergence of new patches at positions of the domain boundaries consistent between the dynamics induced by different harmonic fields proposes that it might be possible to extend the fractal structure of the sandpile identity beyond the boundaries of the domain, somewhat similar to the continuation of analytic functions in the complex plane. Provided that the dynamics induced by the harmonic $H^{4a}$ (Figure~\ref{Fig2}E) are indeed similar to zooming actions, this extension might result in a (potentially infinite) wallpaper-like structure composed of (locally) similar fractal structures aligned on a regular grid, and this extension might be self-similar to itself when viewed at different magnification levels. The similarities between the sandpile identity dynamics induced by $H^{4a}$ on different domains (Supplementary Figure~\ref{FigS4}) furthermore proposes that there might be only one fundamental extension for each domain topology (topological invariance), or even only one fundamental extension at all. Given such a fundamental extension, a specific extension for a given domain might then be obtainable by some smooth transformation. 

The different scaling laws for the speed with which tropical curves, patches, fractals and hyper-fractals move during the sandpile dynamics seem to be a direct consequence of the fact that tropical curves on finite domains have a finite instead of an infinitesimal width (usually one vertex) which remains unchanged when scaling the domain, while the width of patches visible in the sandpile identity (Figure~\ref{Fig1}B) scales with the domain size.
In this context, it is interesting to note that the smooth transformation of the central square of the sandpile identity induced by second order harmonics into either the tips of two Sierpinski triangles ($H^{2a}$, Figure~\ref{Fig2}B), or into a large set of tropical curves ($H^{2b}$, Figure~\ref{Fig2}C) proposes that all patches constituting the sandpile identity might be composed of tropical curves. In contrast, the dynamics induced by $H^{4a}$ (Figure~\ref{Fig2}C) propose the alternative interpretation that the sandpile identity itself might be composed of fractals. These two interpretations don't necessarily have to contradict each other if we  e.g. would allow tropical curves to be composed of fine strings of fractals. We note that such interpretations are somewhat reminiscent of similar discussions in string theory, where strings (tropical curves) and branes of different dimensions (corresponding to patches, fractals, hyper-fractals and so on) occur. 

Our results also suggest that the harmonic fields themselves generate the extended sandpile group on any domain. We show that the space of harmonics is closely related to the scaling limit of the sandpile group (see Theorem 1). Specifically, for each order, there are two linearly independent harmonics which generate a subtorus of the extended sandpile group. It seems that the directions on this torus are clearly visible in the identity dynamics induced by the harmonics, and are consistent when we linearly combine two harmonics of the same order (Figure S2). For a finite domain, we would furthermore expect that the potentials corresponding to harmonics of sufficiently high order should at one point be linear combinations of the potentials corresponding to lower order harmonics. The minimal action principle for the toppling function \cite{Fey2010} might then imply that, for finite domains, only sandpile identity dynamics for harmonics up to a certain maximal order can be induced, in agreement to the finite number of generators of the abelian group \cite[p.~26--27]{Paoletti2014}. 
Given this interpretation, the observation that regular fractal configurations occur at times corresponding to multiples of simple fractions in the sandpile identity dynamics induced by third or higher order harmonics would have an interesting explanation: these simple fractions would correspond to simple roots of the identity corresponding to the respective generator/harmonic.

Our analysis of the stochastic sandpile identity dynamics induced by harmonic fields generated by Markov processes indicates that also the configurations visited by those dynamics are critical (Figure~\ref{Fig4}F). However, the critical exponent of the avalanche size distributions seems to depend on the order of the harmonic.
This indicates that the different harmonics might divide the sandpile group into different sub-sets each showing scale-free spatio-temporal relationships, however, with different critical exponents. If it would be possible to determine the critical exponents of these individual sub-sets of the sandpile group with a high confidence utilizing the fact that the sandpile identity dynamics are periodic, it might be possible to reconstruct the critical exponent corresponding to the whole sandpile group by taking an adequately weighted mean.

Finally, it seems fascinating that the sandpile dynamics induced by harmonic fields can be utilized to encode information in seemingly random configurations and to robustly decode them utilizing an extremely simple stochastic Markov process requiring only the knowledge of the harmonic function used for encoding.
While we expect the ``randomness'' of the configurations carrying the information to increase with increasing order of the harmonic function used for encoding, and the ``errors'' due to the stochasticity of the decoding mechanism to decrease with increasing domain sizes, further research would have to analyze if this mechanism results in a sufficiently safe information encoding. Even though a widespread application of the sandpile model in cryptography is rather unlikely, our results pose the question if it is possible to implement similar processes for other self-critical or critical systems. If so, our results might lead to new information storage mechanisms when applied to physical systems, or new interpretations when applied to naturally evolved biological networks.

\bibliographystyle{IEEEtran}
\bibliography{sandpiles}

\begin{thebibliography}{10}
\providecommand{\url}[1]{#1}
\csname url@rmstyle\endcsname
\providecommand{\newblock}{\relax}
\providecommand{\bibinfo}[2]{#2}
\providecommand\BIBentrySTDinterwordspacing{\spaceskip=0pt\relax}
\providecommand\BIBentryALTinterwordstretchfactor{4}
\providecommand\BIBentryALTinterwordspacing{\spaceskip=\fontdimen2\font plus
\BIBentryALTinterwordstretchfactor\fontdimen3\font minus
  \fontdimen4\font\relax}
\providecommand\BIBforeignlanguage[2]{{%
\expandafter\ifx\csname l@#1\endcsname\relax
\typeout{** WARNING: IEEEtran.bst: No hyphenation pattern has been}%
\typeout{** loaded for the language `#1'. Using the pattern for}%
\typeout{** the default language instead.}%
\else
\language=\csname l@#1\endcsname
\fi
#2}}

\bibitem{Bak1987}
P.~Bak, C.~Tang, and K.~Wiesenfeld, ``Self-organized criticality: an
  explanation of the 1/f noise,'' \emph{Physical review letters}, vol.~59,
  no.~4, p. 381, 1987.

\bibitem{Aschwanden2016}
M.~J. Aschwanden, N.~B. Crosby, M.~Dimitropoulou, M.~K. Georgoulis,
  S.~Hergarten, J.~McAteer, A.~V. Milovanov, S.~Mineshige, L.~Morales,
  N.~Nishizuka, \emph{et~al.}, ``25 years of self-organized criticality: solar
  and astrophysics,'' \emph{Space Science Reviews}, vol. 198, no. 1-4, pp.
  47--166, 2016.

\bibitem{Paoletti2014}
G.~Paoletti, \emph{Deterministic {A}belian sandpile models and patterns}, ser.
  Springer Theses--Recognizing Outstanding {Ph.D.} Research.\hskip 1em plus
  0.5em minus 0.4em\relax Cham, Switzerland: Springer International Publishing,
  2014.

\bibitem{Lu1991}
E.~T. Lu and R.~J. Hamilton, ``Avalanches and the distribution of solar
  flares,'' \emph{The astrophysical journal}, vol. 380, pp. L89--L92, 1991.

\bibitem{Olami1992}
Z.~Olami, H.~J.~S. Feder, and K.~Christensen, ``Self-organized criticality in a
  continuous, nonconservative cellular automaton modeling earthquakes,''
  \emph{Physical review letters}, vol.~68, no.~8, p. 1244, 1992.

\bibitem{Malamud1998}
B.~D. Malamud, G.~Morein, and D.~L. Turcotte, ``Forest fires: an example of
  self-organized critical behavior,'' \emph{Science}, vol. 281, no. 5384, pp.
  1840--1842, 1998.

\bibitem{Sneppen1995}
K.~Sneppen, P.~Bak, H.~Flyvbjerg, and M.~H. Jensen, ``Evolution as a
  self-organized critical phenomenon,'' \emph{Proceedings of the national
  academy of sciences}, vol.~92, no.~11, pp. 5209--5213, 1995.

\bibitem{Levina2007}
A.~Levina, J.~M. Herrmann, and T.~Geisel, ``Dynamical synapses causing
  self-organized criticality in neural networks,'' \emph{Nature physics},
  vol.~3, no.~12, p. 857, 2007.

\bibitem{Sornette2017}
D.~Sornette, \emph{Why stock markets crash: critical events in complex
  financial systems}.\hskip 1em plus 0.5em minus 0.4em\relax Princeton
  University Press, 2017.

\bibitem{Creutz1990}
M.~Creutz, ``Abelian sandpiles,'' \emph{Computers in physics}, vol.~5, no.~2,
  pp. 198--203, 1991.

\bibitem{Bhupatiraju2017}
S.~Bhupatiraju, J.~Hanson, A.~A. J{\'a}rai, \emph{et~al.}, ``Inequalities for
  critical exponents in $ d $-dimensional sandpiles,'' \emph{Electronic journal
  of probability}, vol.~22, 2017.

\bibitem{Dhar1995}
D.~Dhar, P.~Ruelle, S.~Sen, and D.-N. Verma, ``Algebraic aspects of abelian
  sandpile models,'' \emph{Journal of physics A: mathematical and general},
  vol.~28, no.~4, p. 805, 1995.

\bibitem{Pegden2013}
W.~Pegden, C.~K. Smart, \emph{et~al.}, ``Convergence of the abelian sandpile,''
  \emph{Duke mathematical journal}, vol. 162, no.~4, pp. 627--642, 2013.

\bibitem{Levine2016}
L.~Levine, W.~Pegden, and C.~K. Smart, ``Apollonian structure in the abelian
  sandpile,'' \emph{Geometric and functional analysis}, vol.~26, no.~1, pp.
  306--336, 2016.

\bibitem{Pegden2017}
W.~Pegden and C.~K. Smart, ``Stability of patterns in the abelian sandpile,''
  \emph{arXiv preprint arXiv:1708.09432}, 2017.

\bibitem{Sportiello2015}
A.~Sportiello, ``The limit shape of the abelian sandpile identity,'' 2015,
  limit shapes, ICERM.

\bibitem{LeBorgne2002}
Y.~Le~Borgne and D.~Rossin, ``On the identity of the sandpile group,''
  \emph{Discrete mathematics}, vol. 256, no.~3, pp. 775--790, 2002.

\bibitem{Caracciolo2010}
S.~Caracciolo, G.~Paoletti, and A.~Sportiello, ``Conservation laws for strings
  in the abelian sandpile model,'' \emph{EPL (europhysics letters)}, vol.~90,
  no.~6, p. 60003, 2010.

\bibitem{Kalinin2015}
N.~Kalinin and M.~Shkolnikov, ``Tropical curves in sandpile models,''
  \emph{arXiv preprint arXiv:1502.06284}, 2015.

\bibitem{Kalinin2016}
------, ``Tropical curves in sandpiles,'' \emph{Comptes rendus mathematique},
  vol. 354, no.~2, pp. 125--130, 2016.

\bibitem{Kalinin2017}
------, ``Sandpile solitons via smoothing of superharmonic functions,''
  \emph{arXiv preprint arXiv:1711.04285}, 2017.

\bibitem{Fey2010}
A.~Fey, L.~Levine, and Y.~Peres, ``Growth rates and explosions in sandpiles,''
  \emph{Journal of statistical physics}, vol. 138, no. 1-3, pp. 143--159, 2010.

\bibitem{Caracciolo2008}
S.~Caracciolo, G.~Paoletti, and A.~Sportiello, ``Explicit characterization of
  the identity configuration in an abelian sandpile model,'' \emph{Journal of
  physics A: mathematical and theoretical}, vol.~41, no.~49, p. 495003, 2008.

\bibitem{Lawler2010}
G.~F. Lawler, \emph{Random walk and the heat equation}.\hskip 1em plus 0.5em
  minus 0.4em\relax Providence, RI: American Mathematical Society, 2010,
  vol.~55.

\bibitem{Brugalle2015}
E.~Brugalle \emph{et~al.}, ``Brief introduction to tropical geometry, vol. 1,''
  in \emph{Proceedings of 21st Gokova Geometry-Topology Conference. OpenLibra,
  Gokova}, 2015.

\bibitem{Kalinin2018}
N.~Kalinin and M.~Shkolnikov, ``Introduction to tropical series and wave
  dynamic on them,'' \emph{Discrete \& continuous dynamical systems - A},
  vol.~38, no.~6, pp. 2827--2849, 2018.

\end{thebibliography}

\clearpage

\section*{Supplementary Figures}
\setcounter{figure}{0}
\renewcommand{\figurename}{Supplementary Figure}
\renewcommand{\thefigure}{S\arabic{figure}}
\begin{figure}[htb]
	\centering
	\includegraphics[width=0.73\linewidth]{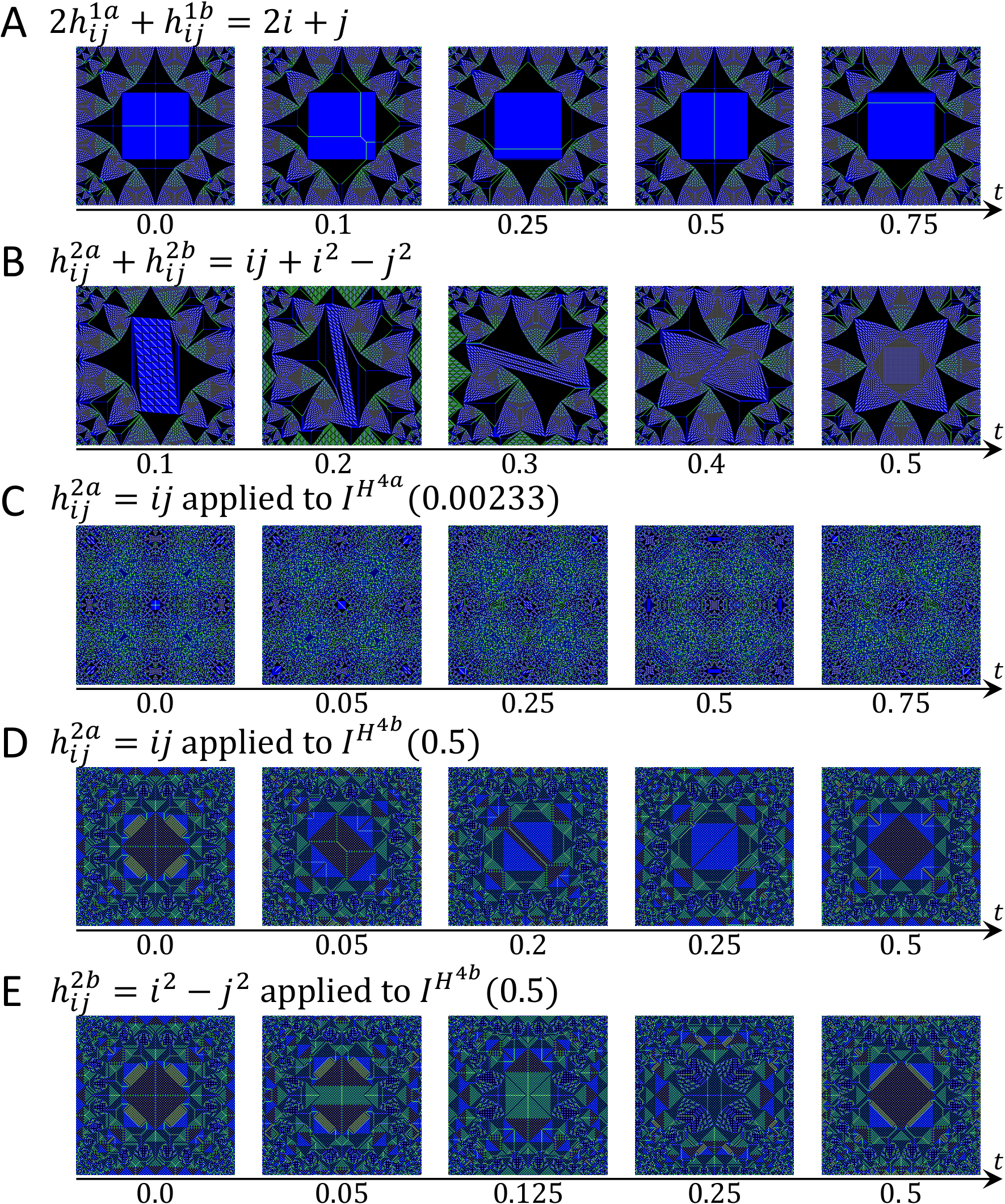}
	\caption{
	Sandpile identity dynamics on a $255\times 255$ domain induced by linear combinations of the harmonic fields used in Figure~\ref{Fig2}. A) Dynamics induced $2h^{1a}_{ij}+h^{1b}_{ij}=2i+j$. B) Dynamics induced by $h^{2a}_{ij}+h^{2b}_{ij}=ij+i^2-j^2$. C) Dynamics induced by $h^{2a}_{ij}=ij$ starting at a configuration corresponding to time $t=0.00233$ of the dynamics induced by $H^{4a}$ when the first regularly spaced local fractal structures resembling the sandpile identity become visible (compare Figure~\ref{Fig2}E). D\&E) Dynamics induced by $h^{2a}_{ij}=ij$, respectively $h^{2b}_{ij}=i^2-j^2$, starting at a configuration corresponding to time $t=0.5$ of the dynamics induced by $H^{4b}$ (compare Figure~\ref{Fig2}F).
	}
	\label{FigS1}
\end{figure}

\begin{figure}[htb]
	\centering
	\includegraphics[width=0.73\linewidth]{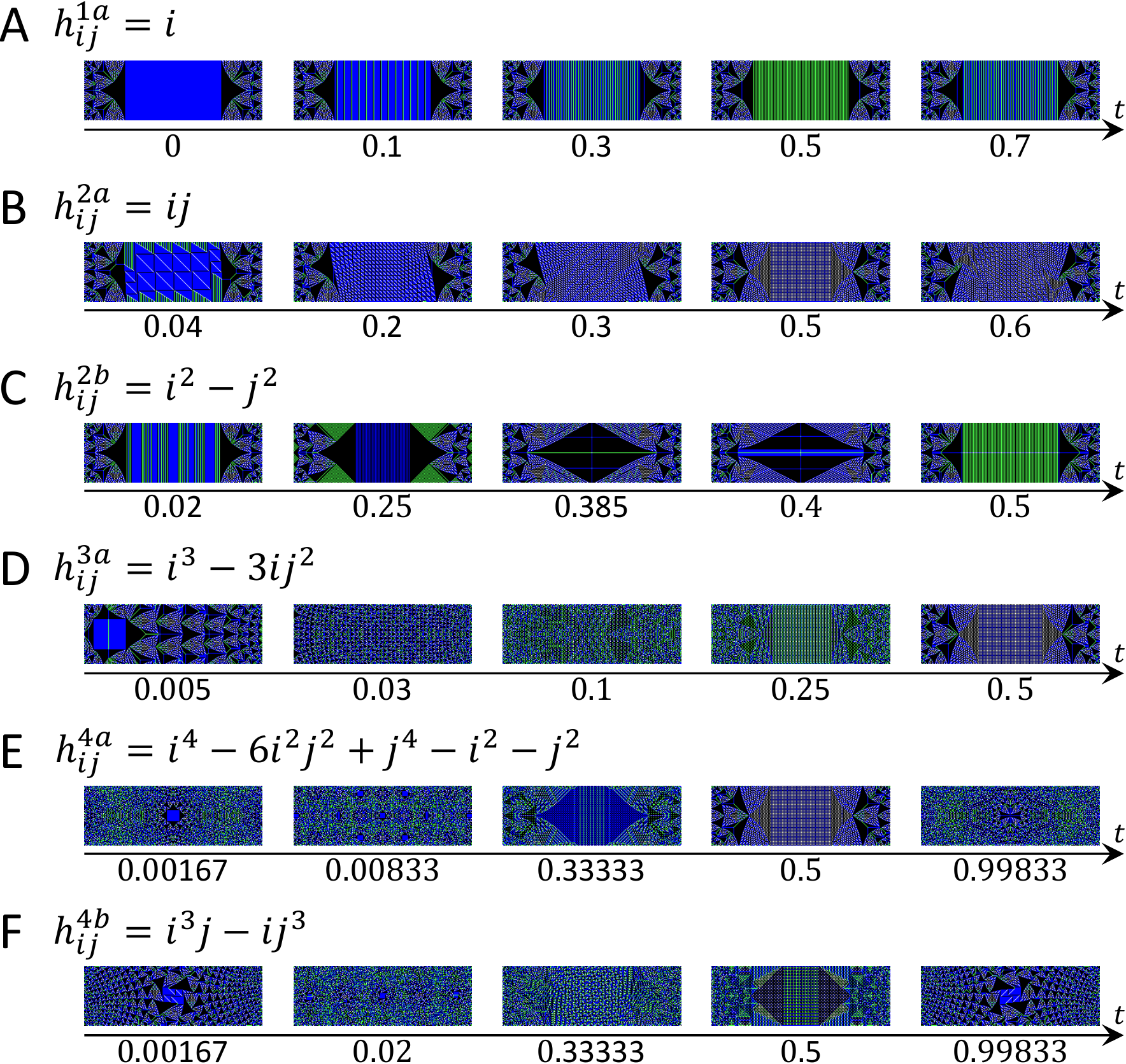}
	\caption{
	Sandpile identity dynamics on a $243\times 81$ rectangular domain. The sandpile identity dynamics were induced with the harmonic fields A) $h_{ij}^{1a}=i$, B) $h_{ij}^{2a}=ij$, C) $h_{ij}^{2b}=i^2-j^2$, D) $h_{ij}^{3a}=i^3-3ij^2$, E) $h_{ij}^{4a}=i^4-6i^2j^2+j^4-i^2-j^2$, and F) $h_{ij}^{4b}=i^3j-ij^3$.
	}
	\label{FigS2}
\end{figure}

\begin{figure}[htb]
	\centering
	\includegraphics[width=0.73\linewidth]{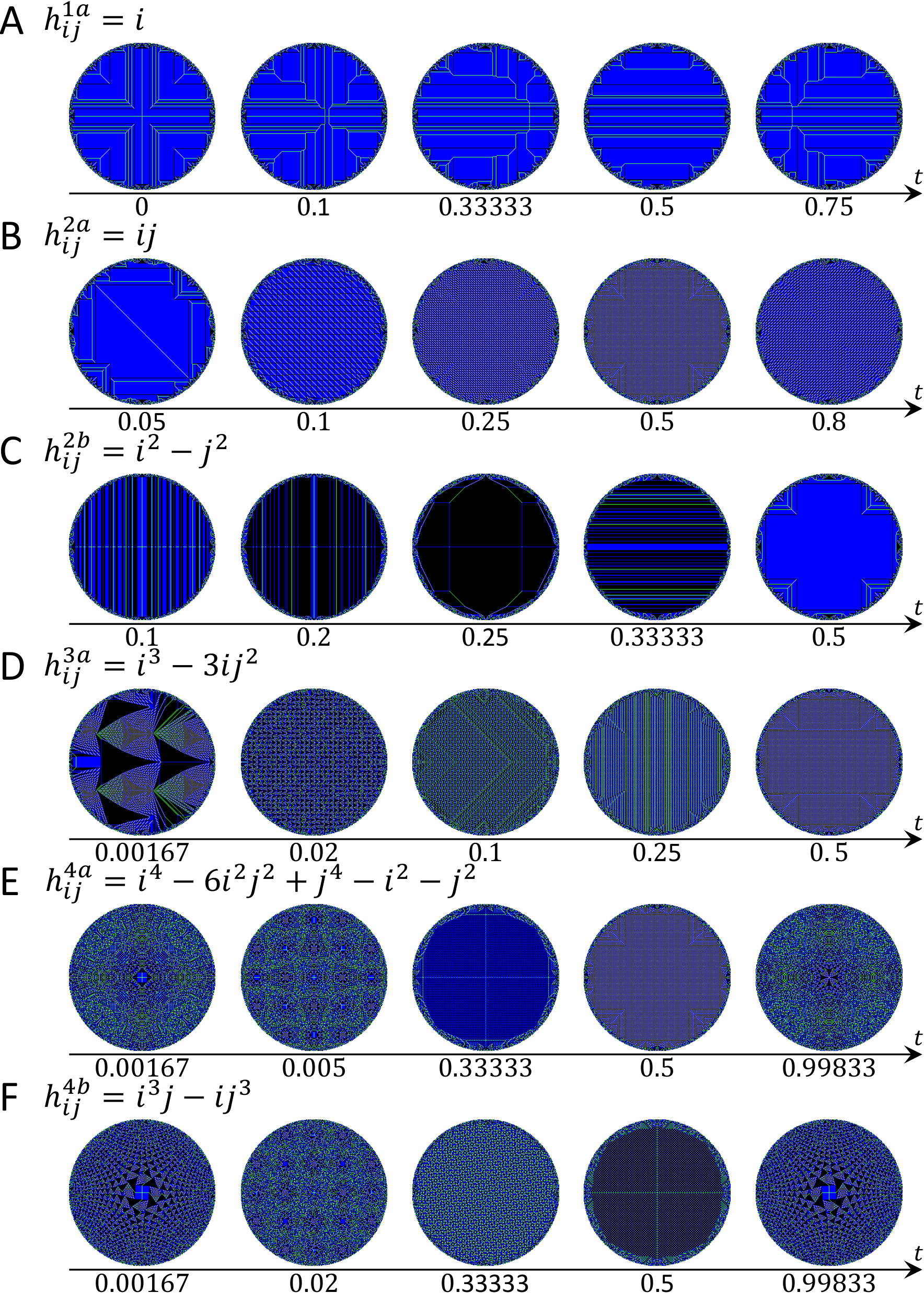}
	\caption{
	Sandpile identity dynamics on a circular domain with diameter $255$. The sandpile identity dynamics were induced with the harmonic field A) $h_{ij}^{1a}=i$, B) $h_{ij}^{2a}=ij$, C) $h_{ij}^{2b}=i^2-j^2$, D) $h_{ij}^{3a}=i^3-3ij^2$, E) $h_{ij}^{4a}=i^4-6i^2j^2+j^4-i^2-j^2$, and F) $h_{ij}^{4b}=i^3j-ij^3$.
	}
	\label{FigS3}
\end{figure}

\begin{figure}[htb]
	\centering
	\includegraphics[width=0.73\linewidth]{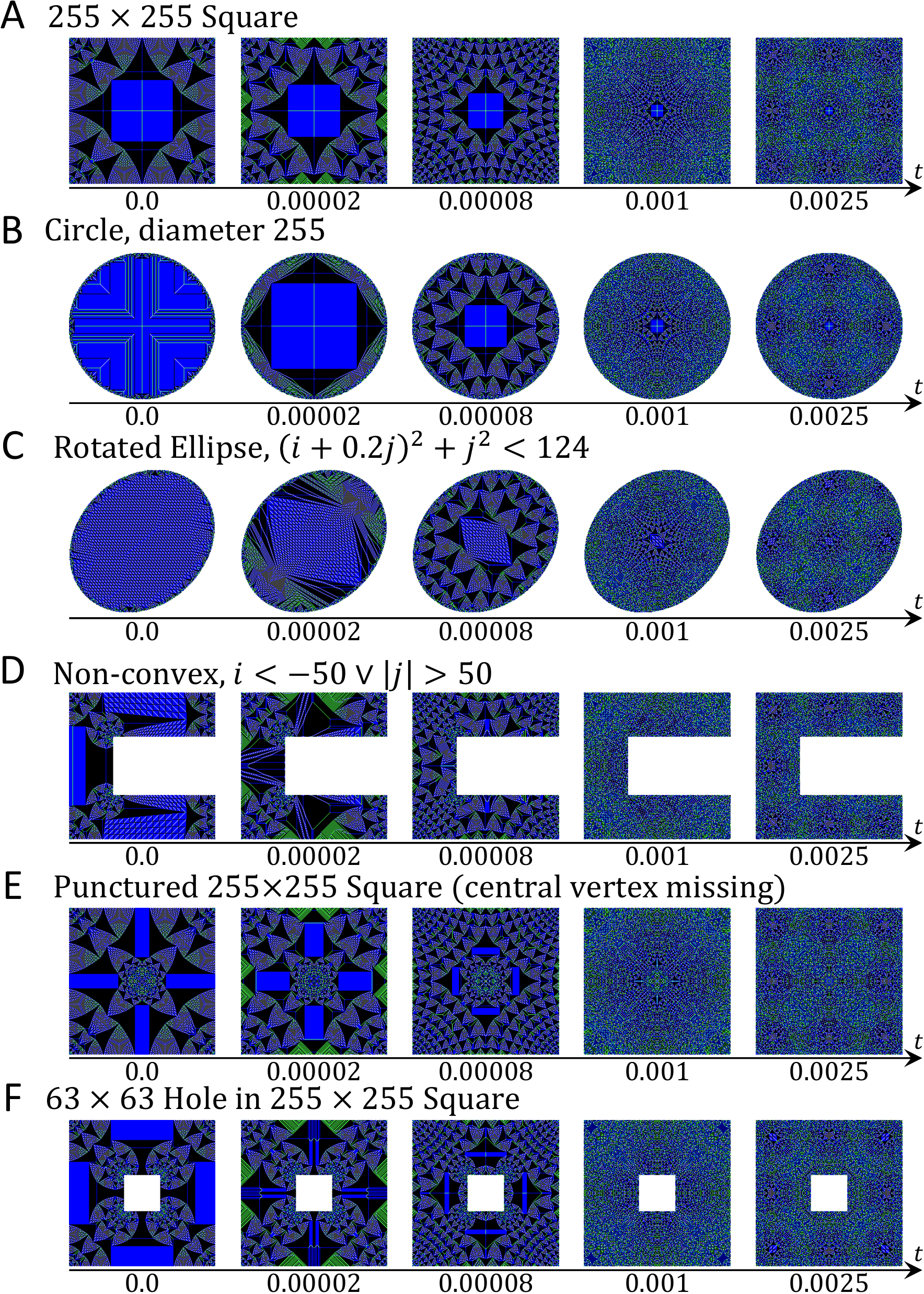}
	\caption{
	Comparison of the sandpile identity dynamics induced by $H^{4a}$ on different domains for times close to $t=0$. A) Square $255\times 255$ domain. B) Circular domain with diameter $255$. C) Rotated ellipse ($(i+0.2j)^2+j^2<124$). D) Non-convex, C-shaped domain ($i<-50\vee |j|>50$). E) Punctured $255\times 255$ domain (square domain with central vertex missing). F) $255\times 255$ square domain with $63\times 63$ hole in the center. 
	}
	\label{FigS4}
\end{figure}

\begin{figure}[htb]
	\centering
	\includegraphics[width=0.73\linewidth]{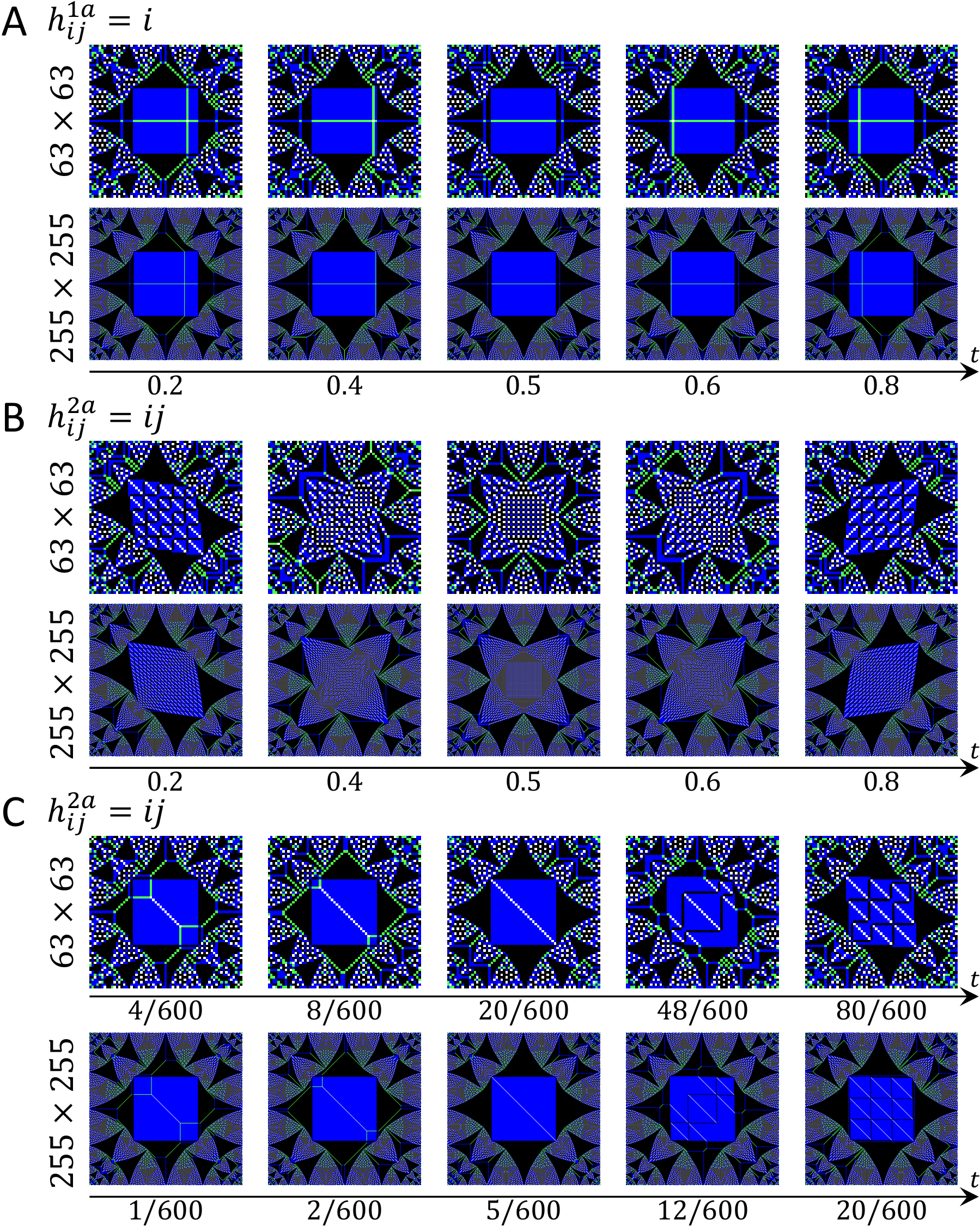}
	\caption{
	Effect of scaling the domain size (top row: $63\times 63$, bottom row $255\times 255$) on the sandpile dynamics induced by $H^{1a}$ (A) and $H^{2a}$ (B and C). 
	}
	\label{FigS5}
\end{figure}

\begin{figure}[htb]
	\centering
	\includegraphics[width=0.73\linewidth]{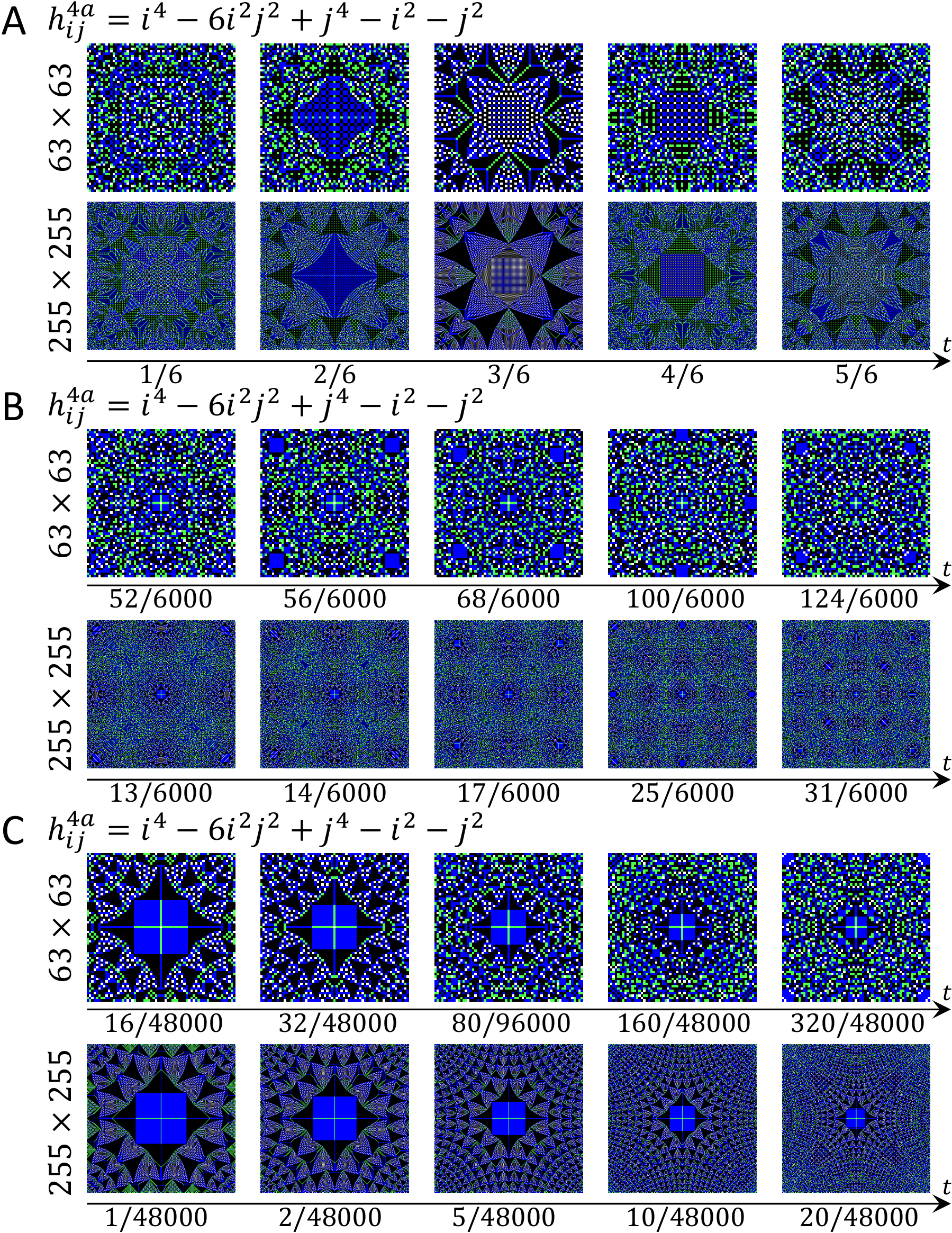}
	\caption{
	Effect of scaling the domain size (top row: $63\times 63$, bottom row: $255\times 255$) on the sandpile dynamics induced by $H^{4a}$. A) Hyper-fractals occur at the same absolute times, independently from the domain size. B) To map the positions of the regularly spaced local fractal structures resembling the sandpile identity between domains of different sizes, time has to be scaled by a factor proportional to the domain size (here by $255/63\approx 4$). C) To map patches between domains of different size, time has to be scaled by a factor proportional to the square of the domain size (here by $(255/63)^2\approx 16$).
	}
	\label{FigS6}
\end{figure}

\begin{figure}[htb]
	\centering
	\includegraphics[width=0.73\linewidth]{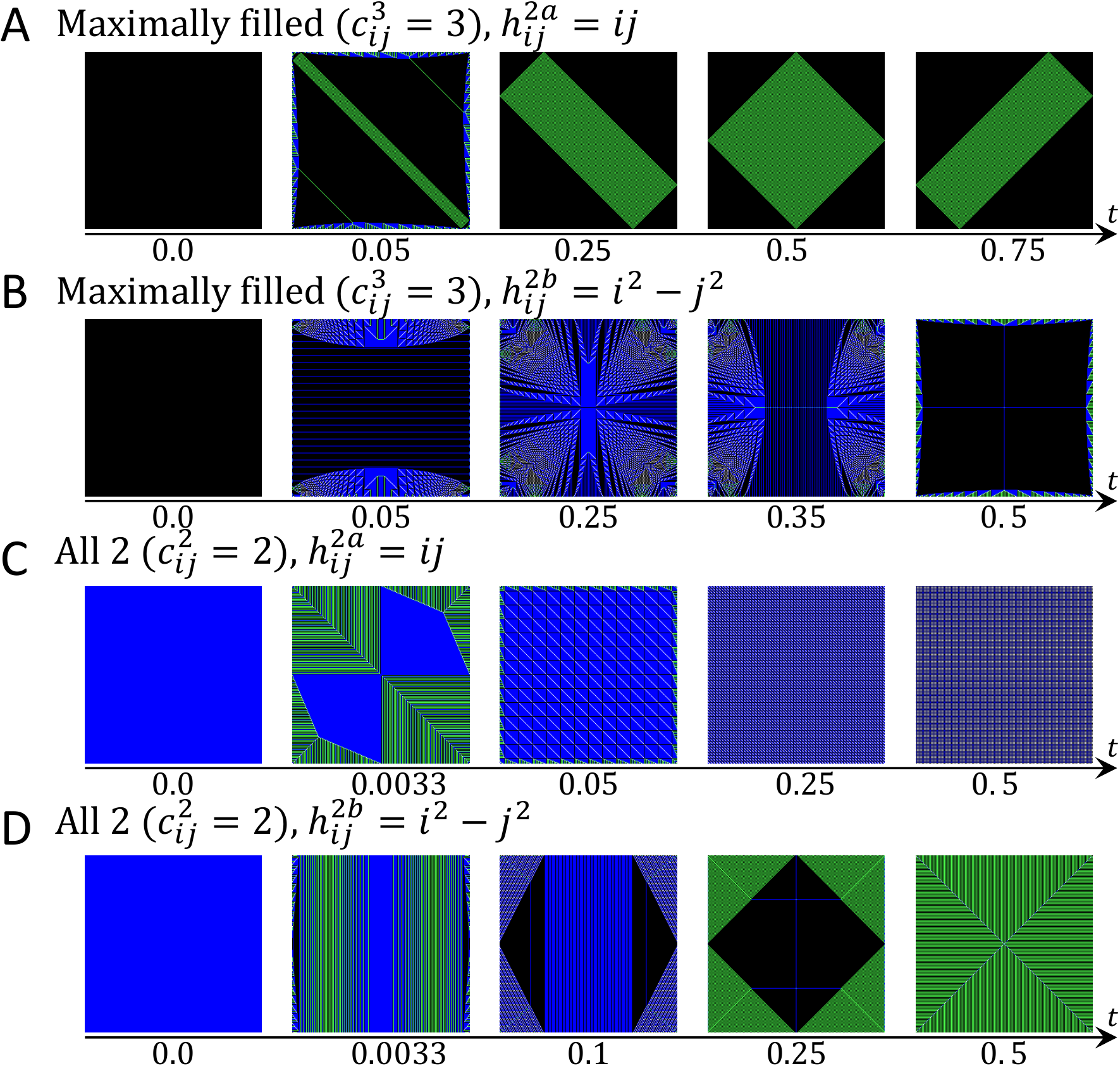}
	\caption{
	Sandpile dynamics when starting at the maximally filled configuration where each vertex carries $c_{ij}^3=3$ particles (A\&B), or the configuration where each vertex carries $c_{ij}^2=2$ particles (C\&D). For each of the two initial configurations, the dynamics induced by $h_{ij}^{2a}=ij$ (A\&C) and $h_{ij}^{2b}=i^2-j^2$ (B\&D) are shown.
	}
	\label{FigS7}
\end{figure}

\begin{figure}[htb]
	\centering
	\includegraphics[width=0.73\linewidth]{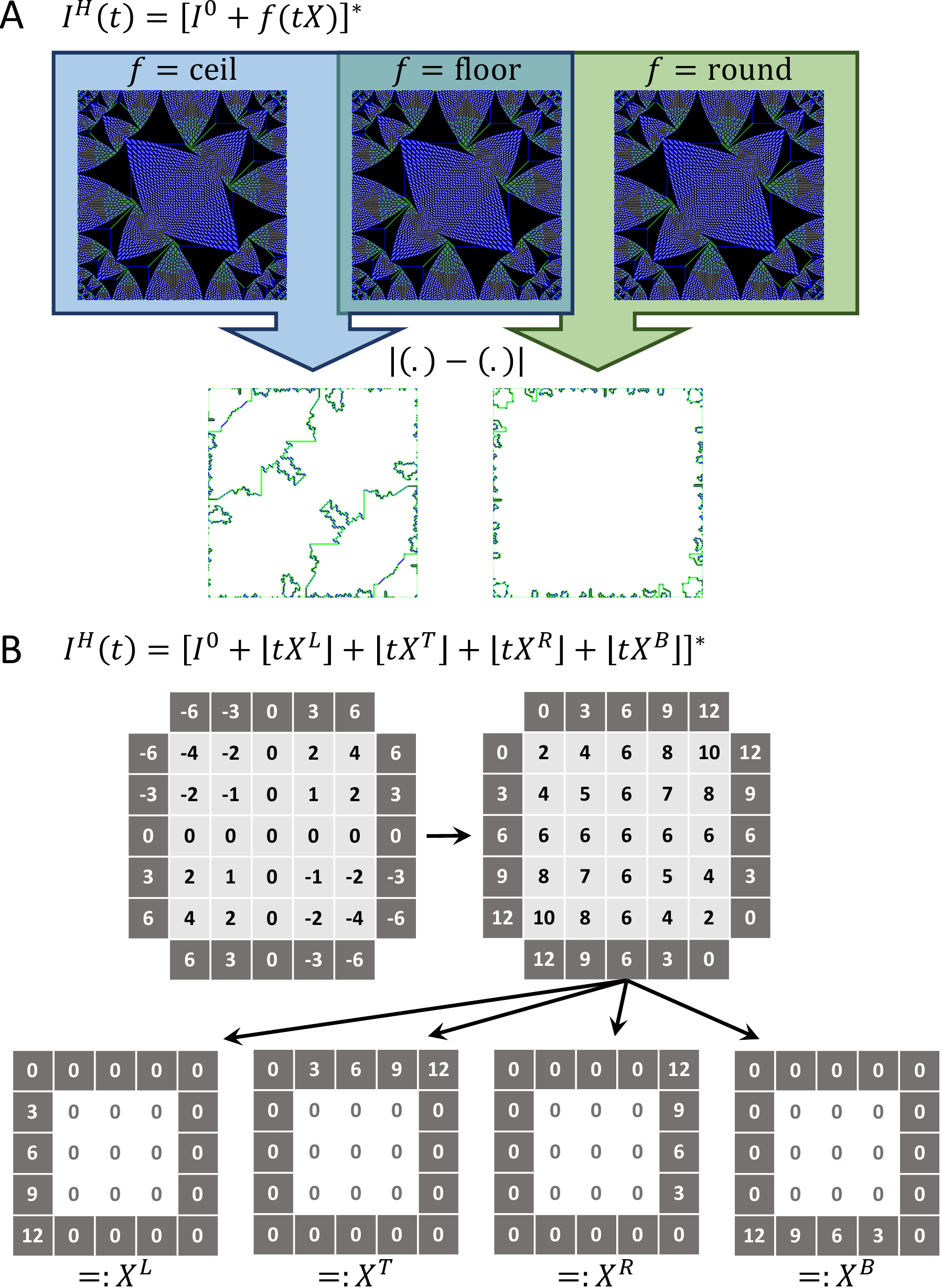}
	\caption{
	Variations of our approach to determine the sandpile identity dynamics. A) Instead of the floor function in Eq.~\ref{dynDef}, one might alternatively take the ceil or the round function. The different choices lead to nearly identical sandpile identity dynamics (here: the dynamics induced by $h_{ij}^{2a}$ at $t=1/3$). Only when directly comparing the number of particles carried by individual vertices, small differences seemingly restricted to the positions of individual tropical curves become visible. B) Instead of one potential as shown in Figure~\ref{Fig1}C, one might define four different potentials for each of the four cardinal directions, thus distinguishing from where from the outside particles are dropped onto the sandpile. This results in a slightly different dropping order onto vertices at the domain corners, while the dropping order onto vertices at the domain edges remains the same. For example, the algorithm depicted in Figure~\ref{Fig1}C results in one particle drop on the vertex at the the left bottom corner at each multiple of $t=1/24$, while the algorithm depicted here results in two particle drops at each multiple of $t=1/12$. While for most harmonics, this results in only minimally different induced sandpile identity dynamics, this algorithm has the interesting property that the constant harmonic function $h^0_{ij}=1$ induces no dynamics at all. It might also better resemble physical systems, where dynamics induced by harmonic fields of order one or higher might e.g. arise due to differences in air pressure around some object -- with the resulting forces being perpendicular to the object's surface -- while no dynamics at all arise as long as their exists no pressure gradient.
	}
	\label{FigS8}
\end{figure}

\end{document}